\documentclass[sigconf,nonacm]{acmart}
\usepackage[utf8]{inputenc}
\usepackage{mathrsfs}
\usepackage{amsmath}
\usepackage{stmaryrd}
\usepackage{listings}
\usepackage{multicol}
\usepackage{hyperref}
\usepackage{tabularx}
\usepackage{multirow}
\usepackage{diagbox}
\usepackage{bbold}
\usepackage{graphicx} 
\usepackage{xcolor}
\usepackage{sansmath}
\usepackage{booktabs}
\usepackage{subcaption}
\usepackage{nicefrac}

\newtheorem{theo}{Theorem}[section]

\newtheorem{rem}{Remark}[subsection]

\newcommand{\FMR}{\ensuremath{{\tt FMR}}}

\newcommand{\FMRe}{\ensuremath{{\tt \overline{FMR}}}}

\newcommand{\CIL}[1]{{\tt CIL}_{#1,\alpha}(\FMRe)}
\newcommand{\CIU}[1]{{\tt CIU}_{#1,\alpha}(\FMRe)}

\newcommand{\Ncomb}{\ensuremath{\mathscr{N}}}
\newcommand{\Kcomb}{\ensuremath{\mathscr{K}}}

\setcopyright{none}



\begin{document}
%
\title{Accuracy Limits as a Barrier to Biometric System Security}





\author{Axel DURBET}
\email{axel.durbet@uca.fr}
\affiliation{%
  \institution{UCA, LIMOS (UMR 6158 CNRS), Clermont-Ferrand,}
  \country{France}
}

\author{Paul-Marie GROLLEMUND}
\email{paul_marie.grollemund@uca.fr}
\affiliation{%
    \institution{UCA, LMBP (UMR 6620 CNRS), Clermont-Ferrand,}
  \country{France}}

\author{Pascal LAFOURCADE}
\email{pascal.lafourcade@uca.fr}
\affiliation{%
    \institution{UCA, LIMOS (UMR 6158 CNRS), Clermont-Ferrand,}
  \country{France}
}

\author{Kevin THIRY-ATIGHEHCHI}
\email{kevin.atighehchi@uca.fr}
\affiliation{%
   \institution{UCA, LIMOS (UMR 6158 CNRS), Clermont-Ferrand,}
 \country{France}}

%
%
%

\begin{abstract}
Biometric systems are widely used for identity verification and identification, including \textit{authentication} (\textit{i.e.}, one-to-one matching to verify a claimed identity) and \textit{identification} (\textit{i.e.}, one-to-many matching to find a subject in a database).  
The matching process relies on measuring similarities or dissimilarities between a fresh biometric template and enrolled templates.  
The \textit{False Match Rate} (\(\FMR\)) is a key metric for assessing the accuracy and reliability of such systems.
This work analyzes biometric systems based on their \(\FMR\), with two main contributions. First, we explore \textit{untargeted attacks}, where an adversary aims to impersonate any user within a database. We determine the number of trials required for an attacker to successfully impersonate a user and derive the \textit{critical population size} (\textit{i.e.}, the maximum number of users in the database) required to maintain a given level of security. Furthermore, we compute the \textit{critical \(\FMR\)} value required to ensure resistance against \textit{untargeted attacks} as the database size increases.
Second, we revisit the \textit{biometric birthday problem} to evaluate the approximate and exact probabilities that two users in a database collide (\textit{i.e.}, can impersonate each other). Based on this analysis, we derive both the approximate \textit{critical population size} and the \textit{critical \(\FMR\)} value required to bound the likelihood of such collisions occurring with a given probability. These thresholds offer actionable insights for designing systems that mitigate the risk of impersonation and collisions, particularly in large-scale biometric databases.
Our findings indicate that current biometric systems fail to deliver sufficient accuracy to achieve an adequate security level against \textit{untargeted attacks}, even in small-scale databases. Moreover, state-of-the-art systems face significant challenges in addressing the \textit{biometric birthday problem}, especially as database sizes grow.
\end{abstract}
\keywords{Biometric Security; Minimal Leakage Model; Birthday Problem 
}

\maketitle

\vspace{-2mm}

\section{Introduction}

Biometric systems, operating in either authentication or identification modes, leverage unique physical or behavioral characteristics to confirm or establish an individual’s identity within a database. Authentication compares a fresh biometric data with an enrolled biometric template, such as when unlocking a smartphone or logging into an account. Identification, on the other hand, entails searching for a match between an individual's biometric data and a database of enrolled records. This process is commonly employed in forensic investigations or large-scale governmental programs, such as national identification systems and border control, where rapid and accurate identification is essential. Biometric technologies are now widely adopted in everyday devices and critical applications, such as online financial platforms, passport verification, and national identification systems (\textit{e.g.}, India’s Aadhaar~\cite{rao2019aadhaar}, Estonia’s e-Residency~\cite{SULLIVAN2017470}, or Malaysia’s MyKad~\cite{loo2009user}).

Biometric templates are structured representations of raw biometric data (\textit{feature vectors}), designed to enable efficient matching and to facilitate integration 
with security technologies, including Biometric Template Protection (BTP) schemes~\cite{sandhya2017biometric,nandakumar2015biometric}.
Despite their convenience, biometric data present unique security challenges, as they are inherently not revocable once compromised, unlike passwords. 
This lack of revocability makes them highly sensitive, and their sensitivity is further amplified by vulnerabilities inherent to biometric systems.
For instance, \textit{untargeted attacks} exploit the probabilistic nature of biometric matching to compromise any template in the database without targeting a specific user, as opposed to \textit{targeted attacks}, where the adversary’s goal is to impersonate a particular individual by matching their specific template. 
In this work, we adopt the \textit{minimal information leakage} model, which ensures that the comparison process reveals only the binary outcome (\textit{i.e.}, acceptance or rejection), completely preventing adversaries from exploiting intermediate computations or states.
Such attacks achieve two objectives: ($i$)~impersonating any user in the database, effectively granting unauthorized access, and ($ii$)~compromising the template of any legitimate user by generating an attacker’s input that closely collides with the stored template. 
These collision inputs expose template details, which could be sensitive if the template is not revocable, thereby violating data protection regulations like the General Data Protection Regulation (GDPR).
In the identification mode, \textit{untargeted attacks} exploit the absence of limits on the number of attempts, unlike the authentication mode, which often enforces restrictions similar to password-based mechanisms.
Analyzing these attacks under worst-case assumptions—such as unlimited attempts—helps uncover systemic vulnerabilities and informs security improvements. 
This approach provides valuable insights into the weaknesses of biometric systems, even when practical constraints, such as enforced attempt limits, are in place.

A key metric for assessing the security of biometric systems is the False Match Rate (\FMR), which quantifies the likelihood of an incorrect acceptance or identification. 
We demonstrate how the security level depends on parameters such as the \FMR, which directly influences the probability of successful attacks, particularly as the density of enrolled templates in the database increases. 
This parameter is especially useful for determining the system's vulnerability to \textit{untargeted attacks} that leverage large database sizes to increase the likelihood of success. 
Moreover, the \FMR\ is also of interest for understanding the \textit{biometric birthday problem}, which highlights the heightened risk of template collisions in large databases, analogous to the classic \textit{birthday problem}, where the likelihood of shared birthdays increases rapidly with group size.
Our analysis of these issues and their implications on attack complexity provides useful findings for mitigating vulnerabilities. 
For instance, by quantifying the impact of database size and \FMR, we align our recommendations with principles outlined in international standards.

We draw an analogy with cryptographic hash functions to support this constructive approach. 
Cryptographic hash functions are evaluated based on parameters such as the output length, which directly influence resistance to preimage and collision attacks~\cite{NewDirectionsCrypto,rogaway2004cryptographic}. 
Similarly, the \FMR\ serves as a key parameter for quantifying attack complexities and deriving system-level recommendations. 
Just as collision resistance in hash functions informs design guidelines, the relationship between \FMR, database size, and security level can guide the construction of robust biometric systems. 
However, as with cryptographic hash functions, correctly dimensioned parameters alone do not guarantee security. 
Design flaws or insufficient entropy in biometric templates can still compromise system robustness. 
The guidelines proposed here are therefore intended to complement broader efforts in building secure and reliable systems.

Although biometric templates are stored in a form that preserves their secrecy and data comparisons are performed securely within biometric systems, their compromise remains possible under the \textit{minimal leakage} model, in both \textit{online} and \textit{offline attack} scenarios, which are relevant to the attacks studied in this paper.
In the \textit{online attack scenario}, adversaries interact with the remote biometric system to test inputs of their choice and identify matches. These attacks rely on the ability to make repeated queries and, when constrained in number, may resemble \textit{password spray attacks}~\cite{wiefling2022pump}, where an attacker minimizes the number of guesses per user while maximizing the number of users targeted.

In the \textit{offline attack scenario}, adversaries gain access to encrypted or transformed templates after a biometric database breach. Such attacks are particularly effective when the transformation relies solely on biometric data as the secret, which is the case for cryptographic obfuscations of evasive functions~\cite{galbraith2019obfuscated,galbraith2021obfuscating} and biometric cryptosystems (BC~\cite{1299169,SHARMA2023103458}). BC schemes, a subcategory of BTP schemes, often use primitives like secure sketches, fuzzy vaults, fuzzy commitments, or fuzzy extractors to either protect an existing cryptographic key or derive one directly from biometric data. These primitives can validate the correctness of reconstructed data or keys using a cryptographic hash function: the hash of the original key or biometric data is stored during enrollment and later compared to the hash of the reconstructed key or data. If the hashes match, the reconstruction is validated. However, possessing both the helping data and the hash enables offline attacks, where adversaries can brute-force or reconstruct biometric data or keys without system interaction. While the search space is limited by the entropy of the biometric data, offline attacks remove network latency, allowing unlimited, high-speed guesses. In some cases, the secrecy of non-revocable templates depends on an independent secret key, whose compromise can lead to the complete compromise of the database.

Biometric recognition systems follow standards like ISO/IEC 24745~\cite{ISO24745} and ISO/IEC 30136~\cite{ISO30136}, which define four security criteria: \emph{irreversibility}, \emph{unlinkability}, \emph{revocability}, and \emph{performance preservation}. Secure biometric systems achieve these goals through a chain of treatments. Below, we contextualize our work in relation to these principles:
\begin{itemize}
\item    \textbf{Irreversibility and revocability in protected templates:} Secure biometric templates must ensure \textit{irreversibility}, preventing reconstruction of the original biometric data from the (protected) template, and \textit{revocability}, enabling replacement of compromised templates. Revocability requires randomizing features using a replaceable and secret user-specific token (\textit{e.g.}, stored secret or password). However, not all irreversible methods allow revocability\footnote{For example, a hashed social security number ensures irreversibility (up to the entropy of the number) but is not revocable because the input is fixed. The SSN is particularly sensitive as it serves as a unique personal identifier and is very difficult to change, even in cases of compromise. Since the hash is completely determined by this sensitive input, it cannot be replaced if compromised. Adding salt prevents attackers from reusing precomputed tables (e.g., rainbow tables) but does not eliminate vulnerability to exhaustive search attacks, as each salt is public and each salted hash can still be brute-forced independently.}, even in secure biometric systems—where data are stored in a form that preserves their secrecy, and comparisons are performed securely. The sequence of treatments may include cancelable biometrics (CB~\cite{patel2015cancelable,manisha2020cancelable}), a subcategory of BTP schemes, that map biometric features into a \textit{new metric space} using a secret-based one-way transformation applied client-side.
This process enables comparisons of transformed features instead of raw data, improving entropy, supporting template revocation, and mitigating risks such as impersonation and reconstruction of raw features.
Non-revocable templates (\textit{i.e.}, those not randomized client-side) are not uniformly distributed, making them more susceptible to exhaustive search attacks. In this work, we demonstrate that untargeted attacks can be performed efficiently when the sequence of treatments excludes such a revocable transformation. This underscores the importance of client-side randomization.

    \item \textbf{Unlinkability across systems:}  
This criterion ensures that biometric templates from different systems cannot be correlated, preserving user privacy. 
However, reversing part of the templates through untargeted attacks may act as an initial step toward such linkability.


    \item \textbf{Efficient performance preservation:}  
Security measures should preserve the biometric system’s \textit{accuracy}, \textit{speed}, and \textit{scalability}.     
Our findings characterize this trade-off, showing how larger databases increase collision risks, thereby reducing the \textit{accuracy} in the \textit{identification mode}.

\end{itemize}

\paragraph{Contributions}
This paper investigates the security of biometric systems, focusing on two critical aspects: untargeted attacks and the \emph{biometric birthday problem}. By analyzing the False Match Rate (\FMR) and its confidence interval---a range of values that likely contains the true \FMR\ with a specified probability---we derive indicators to support the design of secure biometric systems. 
These indicators are: ($i$) \emph{the critical population}, which represents the largest number of entries a biometric system can handle while satisfying a predefined security level, and ($ii$) \emph{the critical $\FMR$}, which is the highest $\FMR$ required to uphold that security level. Both indicators explicitly account for operational constraints to ensure their practical applicability.
The specific contributions are detailed below:

\vspace{-1mm}

\begin{enumerate}
\item[($i$)] \textbf{Untargeted Attack Analysis Based on \FMR.} This part focuses on the security implications of \emph{untargeted attacks} against biometric systems. 
Using the \FMR\ and its confidence interval, we derive the complexity of such attacks and their impact on system design, particularly in relation to the size of the user databases:
    \begin{itemize}
        \item We analyze the feasibility of untargeted attacks in biometric systems, focusing on a \textit{non-adaptive attacker}, as this approach enables the derivation of general recommendations applicable across different types of biometric systems and modalities. Adaptivity, in contrast, requires case-by-case studies specific to the system and modality, which are beyond the scope of this analysis.
        \item From this analysis, we derive the \textit{critical population} \textit{i.e.}, the maximum database size that guarantees a predefined security level against untargeted attacks.
        \item We introduce the \textit{critical \FMR\ }, the maximum acceptable \FMR\ required to maintain the predefined security level for a given database size.
        \item These results emphasize the relationship between \FMR, database size, and system security, providing practical recommendations for designing and deploying secure biometric systems at scale.
        \item The analysis is complemented by detailed graphs that illustrate the relationships between the parameters. 
    \end{itemize}
\item[($ii$)] \textbf{Analysis of the Biometric Birthday Problem.} This part delves into the biometric \textit{birthday problem}, exploring both its approximate and exact formulations. By integrating the confidence interval on the \FMR, we provide a refined analysis of collision probabilities and their impact on biometric system design:
\begin{itemize}
    \item 
    We provide a rigorous explanation of the approximate \textit{biometric birthday problem}, originally introduced by John Daugman~\cite{daugman2015information,daugman2016searching,Daugman24Understand}. His analysis relies on the strong independence assumption that the absence of a collision between a user \(A\) and another user \(B\) does not influence the probability of a collision between \(A\) and any other user \(C\). We refine his approximation by incorporating the confidence interval on the \FMR.  
    \item 
    From this analysis, we compute the approximate \textit{critical population}, where weak collisions---two different biometric templates being incorrectly matched, as in the \textit{birthday problem}---become significant, given a predefined security level and the confidence interval on the \(\FMR\).
    \item We define the \textit{critical $\FMR$} for weak collisions, offering guidance to reduce collision risks in large-scale deployments.  
    \item For the first time, we introduce the exact biometric \textit{birthday problem}, which overcomes the strong independence assumption made by Daugman. This analysis provides a precise estimation of collision probabilities and confirm the approximation of Daugman.  
    \item These results are complemented by detailed graphs visualizing collision probabilities, \textit{critical populations}, and \textit{critical \FMR\ } values, offering practical insights for system designers.
\end{itemize}
\end{enumerate}

\paragraph{Related Works}
Attacks on biometric systems are often analyzed on a case-by-case basis, reflecting the wide variety of secure biometric systems proposed in the literature. However, there has been limited exploration of generic attack methods or their effectiveness across diverse systems. An example of such generic attacks is \textit{targeted attacks}, where an adversary seeks to impersonate a specific individual within a database. Pagnin \textit{et al.}~\cite{PagninDAM14} introduced the "center search attack," a method that iteratively refines queries using minimal information leakage, such as a single bit indicating a match or mismatch, to locate specific templates. Their framework is built around binary templates and the Hamming distance as the underlying similarity metric. While this study provides valuable insights, its scope is constrained to specific template representations and similarity measures, limiting its relevance to more generalized biometric systems.

In contrast, \textit{untargeted attacks}, where an outsider aims to impersonate any user within a database, remain unexplored in the literature. 
Unlike targeted attacks, these attacks exploit the probabilistic nature of biometric matching, leveraging statistical properties such as the FMR to increase the likelihood of unauthorized access. 
The \FMR\ quantifies the probability that a random biometric input will match a stored template, making it a critical parameter for assessing the vulnerabilities of biometric systems. Furthermore, due to the non-uniform distribution of biometric data, \textit{untargeted attacks} are expected to perform significantly better than if the adversary were testing against uniformly distributed data. 
The increasing availability of biometric data, driven by advancements in generative artificial intelligence and synthetic data generation, further complicates these vulnerabilities.

Among the earliest tools for generating synthetic biometric data, SFinGe~\cite{cappelli2002synthetic}, 
has been widely used to generate synthetic fingerprint datasets, enabling the evaluation of fingerprint recognition algorithms under controlled conditions. 
Its ability to simulate diverse fingerprint patterns and variations has made it a valuable tool for testing and benchmarking in biometric research. Building on these foundations, Bahmani \textit{et al.}~\cite{bahmani2021high} proposed a method for generating high-fidelity fingerprints, focusing on improving quality, uniqueness, and privacy. 
These advancements have enhanced the utility of synthetic data for fingerprint recognition and system evaluation.

The emergence of generative adversarial networks (GANs) has further broadened the scope of synthetic biometric data generation. 
Engelsma \textit{et al.}~\cite{engelsma2022printsgan} introduced PrintsGAN, which leverages GANs to generate fingerprints with improved realism and diversity, addressing limitations in earlier methods. 
Beyond fingerprints, generative models have extended to other biometric modalities. 
Kim \textit{et al.}~\cite{kim2023dcface} proposed DcFace, a synthetic face generator based on a dual-condition diffusion model, capable of producing highly realistic and diverse facial images. These tools provide valuable resources for augmenting training datasets and testing machine learning models under varied conditions. While these synthetic data generators have advanced the state of biometric system evaluation, they also introduce potential security risks. The ease of creating realistic biometric data lowers the barrier for malicious actors to conduct attacks, such as exhaustive searches or the creation of forged templates. These developments underscore the importance of studying how system parameters, such as the FMR, interact with synthetic data and affect the robustness of biometric systems.

Biometric deployments at national or global scales introduce significant challenges, particularly regarding the \textit{biometric birthday problem}. 
Analogous to the classical \textit{birthday problem}, this problem quantifies the likelihood of collisions between biometric templates as the size of the database increases. 
For example, nationwide systems like Aadhaar in India and facial recognition initiatives in China~\cite{aiyar2017aadhaar,su2022facial}, each managing biometric data for over 1.4 billion individuals, face substantial collision risks that threaten uniqueness. 
Addressing these risks requires robust entropy management and careful parameterization, especially for large-scale deployments where we should maintain the balance between accuracy and security.

The \textit{biometric birthday problem}, first introduced by Daugman~\cite{daugman2003importance,daugman2015information,daugman2016searching,Daugman24Understand}, was briefly discussed in the context of iris recognition. 
Daugman’s work, while insightful, is limited in scope and relies on simplifying assumptions, such as the independence of collisions between templates. 
These assumptions, while practical for mathematical modeling, fail to capture structural dependencies and non-uniform distributions that often occur in real-world biometric systems. 
Building on Daugman’s contributions, we provide both an exact formulation of the \textit{biometric birthday problem} and a refined analysis of its approximate formulation. 
By removing the independence assumption, our study generalizes the problem to a wider range of biometric systems, addressing practical complexities often overlooked. 
This broader analysis provides a more precise understanding of collision risks in large databases and aims to offer practical guidance for mitigating these risks across diverse biometric modalities.

\paragraph{Outline}
The paper is structured as follows. Section~\ref{sec:Background} introduces the key notations and concepts necessary for understanding the subsequent sections. Section~\ref{sec:untar} presents the analysis of \textit{untargeted attacks}. 
Section~\ref{sec:coll} explores the problem of collisions in biometric systems. 
Finally, Section~\ref{sec:conc} concludes the paper.

\section{Preliminaries}
\label{sec:Background}
This section introduces the foundational concepts and theorems necessary for the remainder of this paper. First, we detail the calculation of the False Match Rate (\(\FMR\)) and discuss the construction and interpretation of its confidence intervals. Next, we present the attack models, focusing on both \textit{targeted} and \textit{untargeted} scenarios.  
Finally, we address the issue of collisions and detail their different types, including those related to the \textit{biometric birthday problem}.

\subsection{False Match Rate}
\label{FMR_def}
In the context of authentication systems, specifically in a one-to-one system, the predominant metric utilized is the False Match Rate ($\FMR$). 
This rate represents the probability of a biometric sample being incorrectly recognized by the matcher. 
In other words, this is the probability that the matcher incorrectly determines that a newly collected template matches the stored reference. 
The practical signification of the $\FMR$ lies in its impact on the security and usability of biometric systems. 
A high $\FMR$ implies a greater risk of unauthorized access, which is problematic for high-security applications and security-sensitive environments such as financial institutions or military facilities. 
As the $\FMR$ is directly linked to the security of the biometric system, it is important to quantify its value correctly.

\paragraph{Formal Definition.} According to the Face Recognition Technology Evaluation (FRTE) $1$:$1$ Verification~\cite{2024NIST1to1report}, given a vector of $n$ imposter scores $v$ and $T$ a threshold, an estimation of the $\FMR$ is
\begin{equation}
  \FMRe = \dfrac{1}{n}\sum\limits_{i=1}^n H(T-v_i) = \dfrac{\text{Number of false matches}}{\text{Total number of comparisons}}
\end{equation}
with $H(\cdot)$ the unit step function, and $H(0)$ taken to be $1$. $H(T-v_i)$ can only takes two values \textit{i.e.}, $0$ or $1$ meaning that either the corresponding pair of user match or not.

It is important to highlight that, based on the \FMR\ calculation method, the maximum security a system can achieve is limited to the inverse of the total number of comparisons. In other words, if we consider the entire human population (\textit{i.e.}, $8\times 10^{9}$) with one sample per individual, the smallest observable $\FMR$ is approximately $10^{-19.5}$. While this may seem counterintuitive, it underscores the fact that certain systems may have an accuracy level so sharp that it cannot be effectively measured using this method. For security-critical applications, it may be theoretically desirable to achieve an $\FMR$ this small; however, current technologies are not yet capable of reliably measuring or observing such a low $\FMR$. 
Indeed, the best-performing biometric systems, such as those tested by the National Institute of Standards and Technology (NIST), can achieve extremely low \FMR s. 
For example, state-of-the-art systems, including facial recognition technologies, have been recorded with \FMR s as low as $10^{-8}$ and even \(10^{-9.6}\), corresponding to a false match probability of one in $40$ billion comparisons~\cite{grother2012irex,2024NIST1to1report,grother2009performance}. On the other hand, widely used commercial systems, such as Apple's FaceID~\cite{FaceID} and TouchID~\cite{TouchID}, and Google's AndroidID~\cite{AndroideID}, typically report an $\FMR$ on the order of \(10^{-6}\).

\paragraph{Experimental Estimation and Confidence Interval}
As shown by Schuckers~\cite{schuckers2010computational}, it is possible to compute confidence interval in biometric metrics such as $\FMR$. 
An empirical estimate of the FMR, denoted by $\FMRe$, can be calculated by dividing the number of false matches by the number of matches tested for a given system. By convention, if we consider that matching is a success then, $H(T-v_i)\sim X_i$ a random variable which follows a Bernoulli distribution of probability $\FMR$. 
The above equation can be rewritten as \(\FMRe = \frac{1}{n} \sum_{i=1}^{N} X_i\).
By using the Central Limit Theorem (see Theorem~\ref{thm:CLT} in Appendix~\ref{sec:CLT:appendix} for more details), we deduce that $\FMRe$ follows a normal distribution of mean $\mu$ and variance \(\sigma^2/n\). The estimation of $\mu$ 
is given by $\overline{\mu} = \frac{1}{n} \sum_{i=1}^{n} X_i = \FMRe$.
The estimation of the standard deviation $\sigma$ 
is given by 
\[\overline{\sigma} = \sqrt{\dfrac{1}{n-1}\sum\limits_{i=1}^{n}\left(X_i-\FMR\right)^2} = \sqrt{\dfrac{n}{n-1}\FMRe\left( 1-\FMRe \right)}.\]
Using the confidence interval (see Theorem~\ref{thm:CI} in Appendix~\ref{sec:CI:appendix} for more details) on the normal distribution, we have $\FMR = \FMRe \pm c_\alpha\sqrt{\FMRe(1-\FMRe) / (n-1)}$ with $c_\alpha$ the quantile of the Student's distribution with $n-1$ degrees of freedom. 
Throughout the rest of the paper, we use the following notations for respectively the Confidence Interval Lower bound (CIL) and the Confidence Interval Upper bound (CIU):
($i$) $\CIL{n}=\FMRe - c_\alpha\sqrt{\frac{\FMRe(1-\FMRe)}{n-1}}$ and ($ii$) $\CIU{n}=\FMRe + c_\alpha\sqrt{\frac{\FMRe(1-\FMRe)}{n-1}}$.

\subsection{Attack Models}
This section provides a detailed description of the two main types of attacks on biometric systems, alongside an explanation of biometric collisions. 
The first type, known as \textit{targeted attacks}, focuses on impersonating a specific user in a database. 
The second, \textit{untargeted attacks}, targets any user within the database. 
Lastly, we address the concept of collisions within a biometric database to present a comprehensive overview of security-related concerns.

\paragraph{Targeted Attacks.} 
An external attacker (\textit{i.e.}, an individual not registered in the system) attempts to impersonate a specific user in the database. 
To achieve this, the attacker has access to all the information necessary for the attack, including the user ID and other relevant details. 
As previously noted, the probability of successfully impersonating a specific user is equal to the system’s $\FMR$. 
On average, the number of biometric attempts required for the attacker to succeed is $1 / \FMR$. 
The $\FMR$ should be minimized, ideally conforming to the standards of cryptographic systems, which today require at least $128$ bits for a moderate level of security~\cite{barker2015profile}, and in the near future will correspond to the minimum level~\cite{barker2020Futur}. 
This would correspond to an $\FMR$ smaller than $2^{-128}$. 
However, to the best of our knowledge, no existing biometric systems achieve this level of security. 
Moreover, \textit{targeted attacks} can be performed both \textit{online} and \textit{offline}. 
In \textit{online attacks}, the attacker interacts directly with a remote biometric system to test guesses. 
In \textit{offline attacks}, the attacker tests a chosen biometric data against a locally available obfuscated database, allowing simulations without interacting with a remote system.

\paragraph{Untargeted Attacks.} An external attacker seeks to impersonate any user within the database rather than a specific individual. 
As in \textit{targeted attacks}, the attacker has access to any additional information required to carry out the attack. 
This scenario is commonly associated with identification systems~\cite{dargan2020comprehensive,de2003biometric,ribaric2005biometric,wayman1997generalized,palma2022biometric,jain2007handbook,jain1996introduction}, which reject users who are not known to the system. 
To illustrate \textit{untargeted attacks}, consider a device (\textit{e.g.}, a smartphone) with a biometric recognition system (\textit{e.g.}, fingerprint recognition). 
A user may enroll multiple fingerprints or even those of other individuals on their device. 
Consequently, an attacker attempting to unlock the device does not need to match a specific enrolled biometric but only one of the enrolled templates. 
This makes \textit{untargeted attacks} significantly easier than targeted ones. 
\textit{Untargeted attacks}, however, are not limited to identification systems. 
They are equally applicable to authentication systems, where the attacker attempts to match a guessed template with a specific enrolled identity. 
In this mode, the attacker must test each guessed template against the enrolled biometric data of a claimed identity, introducing additional constraints compared to the identification mode. 
Specifically, in authentication systems, the attacker must know or guess the claimed identity (\textit{e.g.}, the username or account ID) associated with the enrolled biometric template yielding additional costs. 
As for the previous attack, \textit{untargeted attack} may be performed in both \textit{online} and \textit{offline} setup.


\paragraph{Collisions.} 
In the context of a biometric database containing \( N \) users, we distinguish between two types of collisions: \textit{strong collisions} and \textit{weak collisions}. 

\begin{itemize}
    \item A \textit{strong collision} occurs when a specific user in the database can be impersonated by another individual. Formally, for a user \( u \) with biometric template \( b_u \), a strong collision happens if there exists another user \( v \neq u \) such that the biometric matching function satisfies \( \text{Match}(b_u, b_v) = \text{True} \). The probability of a strong collision for a single user is equal to the \FMR , which quantifies the likelihood of incorrectly matching two distinct biometric templates.

    \item A \textit{weak collision} occurs when at least two users in the database can impersonate each other. Formally, for a database of \( N \) users with biometric templates \( b_1, b_2, \ldots, b_N \), a weak collision exists if there exist two distinct users \( i \neq j \) such that \( \text{Match}(b_i, b_j) = \text{True} \). The probability of weak collisions increases with the size of the database and is related to the ``\textit{biometric birthday problem}''~\cite{Daugman24Understand}.
\end{itemize}

The probability of weak collisions depends on the \FMR\ and the database size. A low weak collision probability indicates robust system accuracy, reducing risks of misidentification and impersonation by malicious users. Conversely, a high weak collision probability highlights vulnerabilities, particularly in large-scale databases. Estimating these probabilities through probabilistic modeling or simulations helps optimize system parameters and supports the design of secure, scalable biometric systems.

Ideally, both probabilities should be minimized to ensure robust security. While strong collisions can often be controlled by maintaining a sufficiently low \( \FMR \), the weak collision probability may become significant as the database size increases. In practical scenarios, a weak collision probability below \( 50\% \) is often considered sufficient to maintain acceptable performance and reliability, but additional measures may be required to address risks associated with strong collisions in sensitive applications.

\subsection{Experimental Setup}
The numerical evaluations were performed in \textit{R}~\cite{ihaka1996r} with a precision level of $200$ bits, utilizing the \textit{Rmpfr}~\cite{maechler2024package} package to ensure high accuracy. The results were subsequently stored in CSV files, which were then processed using \textit{Python 3}~\cite{python3} for graphical representation. The Python libraries \textit{NumPy}~\cite{harris2020array}, \textit{Pandas}~\cite{mckinney2010data}, and \textit{Matplotlib}~\cite{Hunter:2007} were employed to generate the visualizations. The selection of the parameter values was made in such a way as to both reflect the current state-of-the-art in biometric systems and provide insight into the performance levels required to achieve high security and scalability.

\section{Untargeted Attacker}
\label{sec:untar}

In this section, we address two key topics. First, we derive the complexity of \textit{untargeted attacks}, analyzing the resources required under various configurations. 
Second, leveraging this analysis, we compute two critical metrics: the maximum population size a system can securely accommodate at a specified \FMR , and the minimum $\FMR$ required to support a given population while maintaining a desired security threshold.

Our study focuses on the identification mode, where an attacker attempts to match a query against a large database. In the authentication mode, however, the complexity of attacks scales by a factor of \(N\), where \(N\) represents the total number of login attempts.
In this scenario, the attacker repeatedly uses the same guess for each login attempt before proceeding to a new guess. This adjustment is conceptually similar to \textit{password spray attacks}, where attackers exploit a large user base to attempt a single password guess per user, thus bypassing lockout mechanisms. 
Consequently, the \textit{critical population} size and \textit{critical $\FMR$} derived here are left to the reader for recalculation to reflect this adjustment. 
Each of these findings is supported by extensive experiments, with tables and graphs illustrating the results and complementing the derived theoretical bounds.


\subsection{Untargeted Attack Complexity}
\label{sec:untar-complexity}

%
%

We analyze the median number of attempts required for a \textit{non-adaptive attacker} to successfully gain acceptance in an identification system. In this context, the attacker aims to match one of the valid users from the database. This scenario can be modeled as follows: the attacker submits a query to the system, and, in this same round, this input is compared against all entries in the database.

To formalize this, let \( A_i \) represent the event that the attacker’s input matches the \( i \)-th valid user in the database. The probability of \( A_i \) is given by the system's \(\FMR\). To account for practical variability, our analysis incorporates a confidence interval on \(\FMR\), reflecting the uncertainty or fluctuation in this metric due to measurement errors, unavoidable sampling errors, or inadequacies in the sampling frame or reference database.

In this context, we consider a \textit{non-adaptive attacker}, which refers to an attacker that does not adapt its strategy based on the outcomes of previous attempts. This type of attacker submits queries independently across rounds, making it a general model applicable to any biometric system. While this assumption provides broad applicability, it is worth noting that in practice, for a specific biometric system and modality, an adaptive attacker could refine its guesses based on previous unsuccessful attempts. For example, the attacker might use feedback from the system to narrow down possible matches. However, such adaptive strategies depend on system-specific characteristics and are beyond the scope of this general analysis.

In this paper, we consider two distinct scenarios regarding the statistical dependence of the events \( A_i \) within the same round:

\begin{itemize}
    \item \textbf{Independence of events \( A_i \):} In this case, the events \( A_i \) are assumed to be mutually independent, meaning that the success of matching one user does not affect the probability of matching another user within the same round.

    \item \textbf{Dependence of events \( A_i \):} Here, we introduce the notion of \emph{collision dependence}, where the success of matching one user might influence the probability of matching others. It is important to note that this dependence is restricted to the same round, as we assume a \textit{non-adaptive attacker} throughout the paper. This implies no auxiliary information is gained or exploited between rounds, ensuring independence across rounds.
\end{itemize}

Under this framework, and noting that the probability of not matching user \( i \) is given by \( 1 - \FMR \), bounded within its confidence interval, we establish the following theoretical result regarding the median number of attempts required.

\begin{theorem}[Bounds on the Complexity of the Untargeted Attacker]
  \label{thm:Approx_untargeted}
  Let \(\FMR\) denote the False Match Rate of a biometric recognition system, with an estimated value \(\FMRe\) obtained from \(n\) comparisons at a significance level \(\alpha\). 
  The median number of attempts required for an untargeted attacker to successfully match one of the \(N\) users is bounded as follows:
  \begin{enumerate}
    \item Under the assumption of independence, the median number of attempts satisfies:
    \[
    \Omega\left(2^{-\log_2\left(\FMRe + \FMRe^2\right) - \log_2(N)}\right)
    \quad \text{and} \quad
    O\left(2^{-\log_2\left(\FMRe\right) - \log_2(N)}\right).
    \]

    \item Under the assumption of non-independence, and for \(\FMR \leq \frac{1}{2N}\), the bounds are:
    \[
    \Omega\left(2^{-\log_2(\FMRe) - \log_2(N^2\cdot\FMRe^2)}\right)
    \quad \text{and} \quad
    O\left(2^{-\log_2\left(\FMRe\right)}\right).
    \]
  \end{enumerate}
Accounting for the confidence interval, $\FMRe$ is replaced by $\CIL{n}$ in the lower bound and by \(\CIU{n}\) in the upper bound.
\end{theorem}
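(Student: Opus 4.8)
The plan is to model the attack as a stream of independent rounds and reduce the entire statement to a one-dimensional estimate of a single per-round success probability. Fix one round and let $p = \Pr\bigl(\bigcup_{i=1}^N A_i\bigr)$ denote the probability that the attacker's query matches at least one enrolled user. Because the attacker is \emph{non-adaptive}, the rounds are i.i.d., so the number of rounds $R$ until the first success is geometric, $\Pr(R > k) = (1-p)^k$, and its median is the least integer $m$ with $(1-p)^m \le \tfrac12$, namely $m = \bigl\lceil \ln 2 / (-\ln(1-p)) \bigr\rceil$. Everything therefore reduces to sandwiching $-\ln(1-p)$ and inverting: a lower bound on $-\ln(1-p)$ produces the $O(\cdot)$ upper bound on the median, and an upper bound produces the $\Omega(\cdot)$ lower bound.

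For the independent case I would use the exact identity $1-p = (1-\FMR)^N$, so that $-\ln(1-p) = N\bigl(-\ln(1-\FMR)\bigr)$. The Mercator series then yields the two-sided estimate $\FMR \le -\ln(1-\FMR) \le \FMR + \FMR^2$, where the upper inequality is obtained by dominating the tail $\sum_{k\ge 3}\FMR^k/k$ by $\FMR^2/2$ (valid for any realistic $\FMR$, in fact for $\FMR \le 3/5$). Multiplying by $N$ and inverting gives $\tfrac{\ln 2}{N\FMR(1+\FMR)} \le m \le \tfrac{\ln 2}{N\FMR}$, which in exponential form is exactly the claimed pair $\Omega\bigl(2^{-\log_2(\FMR+\FMR^2)-\log_2 N}\bigr)$ and $O\bigl(2^{-\log_2\FMR - \log_2 N}\bigr)$.

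The dependent case is where the real work lies, and it is the step I expect to be the main obstacle. Without independence there is no product formula for $1-p$, so I would bracket $p$ by inclusion--exclusion. The first Bonferroni inequality gives the union bound $p \le N\FMR$, while the trivial inclusion $A_1 \subseteq \bigcup_i A_i$ gives $p \ge \FMR$; the second-order Bonferroni inequality refines the latter to $p \ge N\FMR - \binom{N}{2}\max_{i<j}\Pr(A_i\cap A_j)$, whose correction term is of order $N^2\FMR^2$, precisely the factor appearing in the stated lower bound. The hypothesis $\FMR \le \tfrac1{2N}$ is used exactly here: it forces $N\FMR \le \tfrac12$, keeping the union bound nontrivial and the logarithmic estimates inside their region of validity, and it controls the quadratic correction. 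Feeding these envelopes for $p$ through $m = \bigl\lceil \ln 2/(-\ln(1-p)) \bigr\rceil$ yields the upper bound $O\bigl(2^{-\log_2\FMR}\bigr)$ from the maximally dependent extreme $p=\FMR$, together with the stated lower bound. The delicate point is bounding the pairwise collision masses $\Pr(A_i\cap A_j)$, since the collision dependence can place each anywhere in $[0,\FMR]$, and it is their aggregate $\Theta(N^2\FMR^2)$ contribution that governs the lower bound.

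Finally, since only the empirical $\FMRe$ is observed rather than the true $\FMR$, I would propagate the confidence interval of Section~\ref{FMR_def}: each bound is monotone in its $\FMR$ argument, so evaluating the $\Omega$ bounds at the lower endpoint $\CIL{n}$ and the $O$ bounds at the upper endpoint $\CIU{n}$ transports the deterministic estimates into statements holding at significance level $\alpha$, which is the closing assertion of the theorem.
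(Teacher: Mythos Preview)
Your treatment of Case~1 matches the paper exactly: the geometric-median formula, the factorisation $-\ln(1-p)=-N\ln(1-\FMR)$, and the two-sided Mercator estimate $\FMR\le -\ln(1-\FMR)\le \FMR+\FMR^2$ are precisely the authors' steps.

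For Case~2, though, you have misdiagnosed the origin of the $N^2\FMR^2$ term and, as a result, flagged a spurious obstacle. The paper never invokes second-order Bonferroni and never needs to bound $\Pr(A_i\cap A_j)$. It uses only the two crude estimates you already wrote down, $p\le N\,\FMR$ (union bound) and $p\ge \FMR$ (trivial inclusion). The quadratic term appears when the \emph{upper} envelope $p\le N\,\FMR$ is pushed through the same logarithm inequality you used in Case~1: the hypothesis $\FMR\le 1/(2N)$ gives $N\,\FMR\le 1/2$, so $-\ln(1-p)\le -\ln(1-N\,\FMR)\le N\,\FMR+(N\,\FMR)^2$, and inverting yields $m\ge \ln 2/(N\,\FMR+N^2\FMR^2)$. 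That is the whole argument. Your proposed second-order Bonferroni bound $p\ge N\,\FMR-\binom{N}{2}\max_{i<j}\Pr(A_i\cap A_j)$ refines the \emph{lower} bound on $p$, hence the \emph{upper} bound on $m$, not the lower; and without a dependence assumption the pairwise masses can be as large as $\FMR$, so that correction is $O(N^2\FMR)$, not $O(N^2\FMR^2)$. The ``delicate point'' you anticipate is therefore both unresolvable in general and unnecessary here.

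One smaller remark on the confidence-interval step: both bounds on $m$ are \emph{decreasing} in $\FMR$, so a lower bound valid over the whole interval requires evaluating the $\Omega$ expression at $\CIU{n}$ and the $O$ expression at $\CIL{n}$ --- the reverse of what your monotonicity sentence concludes. The paper's own proof does it this way in Case~2, even though the wording of the theorem statement is ambiguous on the point.
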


\begin{proof}
    \textbf{Case 1: Independence Assumption.} 
    Let \( A_i \) represent the event that the attacker does not match the \( i \)-th user. 
    Given a biometric system with performance \(\FMR\), the probability of \( A_i \) is \( 1 - \FMR \). 
    The events \( A_i \) are not necessarily independent, and by the chain rule of probability, we have:
    \[
    \mathbb{P}(A_1 \cap A_2 \cap \cdots \cap A_N) = \mathbb{P}(A_1) \prod_{i=2}^N \mathbb{P}(A_i \mid A_1 \cap A_2 \cap \cdots \cap A_{i-1}).
    \]
    Assuming independence between the events \( A_i \), the probability that the attacker does not match any user is:
    \[
    \mathbb{P}(A_1 \cap \cdots \cap A_N) = \prod_{i=1}^{N} (1 - \FMR) = (1 - \FMR)^{N}.
    \]
    Let \( V_i \) be the event that the attacker matches one of the \( N \) users at the \( i \)-th round. Since the attacker is non-adaptive, the sequence of \( V_i \) events are independent Bernoulli trials, and we are interested in the first success. 
    The median of a geometric distribution, representing the law of first success, is given by $m_{out} = \left\lceil\frac{-1}{\log_2(1 - p)}\right\rceil$.
    The probability of success in a single trial is given by $p = 1 - (1 - \FMR)^{N}$.
    Using the series expansion of \( \ln(1 - x) \) for \( 0 \leq x \leq 1/2 \), we have:
    \[
    \frac{-x - x^2}{\ln 2} \leq \log_2(1 - x) \leq \frac{-x}{\ln 2}.
    \]
    From this, we find that $\log_2(1 - p) = \log_2((1 - \FMR)^{N}) = N \log_2(1 - \FMR)$.
    Hence, we obtain the following bounds:
    \[
    \frac{-N \cdot (\FMR + \FMR^2)}{\ln 2} \leq \log_2(1 - p) \leq \frac{-N \cdot \FMR}{\ln 2},
    \]
    which lead to:
    \[
    \frac{\ln 2}{N} (\FMR + \FMR^2)^{-1} \leq m_{out} \leq \frac{\ln 2}{N} \FMR^{-1}.
    \]
    Then, considering the fact that \(\FMR = \FMRe \pm c_\alpha \sqrt{\frac{\FMRe(1 - \FMRe)}{n - 1}}\), where \(n\) is the number of comparisons used to estimate \(\FMRe\) and \(c_\alpha\) is the quantile of the Student's distribution for significance level \(\alpha\), the result follows.
  
    \textbf{Case 2: Non-Independence.} 
    The Bonferroni inequality~\cite{bonferroni1936teoria} provides an upper bound for the probability of the union of events. Given \( N \) events \( A_1, A_2, \dots, A_N \), the inequality states:
  \begin{eqnarray*}
  \mathbb{P}(A_1 \cap A_2 \cap \cdots \cap A_N) & \ge & 1 - N + \sum_{i=1}^N \mathbb{P}(A_i).
  \end{eqnarray*}
  Moreover, note that we have for all $1 \leq i \leq N$:
  \begin{eqnarray*}
    \mathbb{P}(A_1 \cap A_2 \cap \cdots \cap A_N) & \leq & \mathbb{P}(A_i).
    \end{eqnarray*}
  In this case, applying the Bonferroni inequality and the above inequality gives the following bounds:
  \begin{eqnarray*}
    1-N\cdot\FMR \leq & 1-p & \leq 1-\FMR\\
    \log_2(1-N\cdot\FMR) \leq & \log_2(1-p) & \leq  \log_2(1-\FMR)
  \end{eqnarray*}
  if $\FMR \leq 1/N$. 
  Then, using the expansion series, we have:
  \begin{eqnarray*}
    \dfrac{-N\cdot\FMR-N^2\cdot\FMR^2}{\ln 2} \leq & \log_2(1-p) & \leq \dfrac{-\FMR}{\ln 2}\\
    \dfrac{\ln 2}{N\cdot\FMR+N^2\cdot\FMR^2} \leq & m_{out} & \leq \dfrac{\ln 2}{\FMR}
  \end{eqnarray*}
  if $\FMR \leq 1/2N$.
  Then, considering the fact that \(\FMR = \FMRe \pm c_\alpha \sqrt{\frac{\FMRe(1 - \FMRe)}{n - 1}}\), where \(n\) is the number of comparisons used to estimate \(\FMRe\) and \(c_\alpha\) is the quantile of the Student's distribution for significance level \(\alpha\), we have:
  \begin{equation*}
    \dfrac{\ln 2}{N\cdot\CIU{n}+N^2\cdot\CIU{n}^2} \leq m_{out} \leq \dfrac{\ln 2}{\CIL{n}}
  \end{equation*}
  and the result follows.
\end{proof}


The results presented in Theorem~\ref{thm:Approx_untargeted} show that the security of biometric systems is primarily determined by two factors: the False Match Rate (\FMRe) and the number of users \(N\). 
It is clear that as expected, systems with lower \FMR\ and fewer users are inherently more secure. 

When analyzing the lower bounds, the first term in the exponent is $-\log_2\left(\CIU{n} + \CIU{n}^2\right)$ for the first case and $-\log_2\left(\CIU{n}\right)$ for the second case. Considering only this first term, the median number of rounds required for an attacker is smaller in the first case than in the second. 
Examining the second term, $-\log_2(N)$ in the first case and $-\log_2(N^2\CIU{n})$ in the second case, we observe that when $\CIU{n} \leq 1/N$ (a scenario likely to hold as in this case $\FMR \leq 1/2N$), the number of rounds for the first case remains smaller than that of the second. 
Consequently, for the lower bound, the first case grants the attacker a greater advantage, requiring fewer attempts to succeed. Conversely, for the upper bound, it is clear that the number of rounds in the first case is consistently smaller than in the second. 
Therefore, when seeking to establish a pessimistic security bound, it is appropriate to rely on the lower bound derived from the first case. 

\begin{rem}
No matter which specific case is considered, asymptotic bounds on the median number of rounds required for an untargeted attacker to succeed are as follows:
  \[
  \Omega\left(2^{-\log_2\left(\FMRe + \FMRe^2\right) - \log_2(N)}\right)
  \quad \text{and} \quad
  O\left(2^{-\log_2\left(\FMRe\right)}\right).
  \]
Incorporating the confidence interval, $\FMRe$ is replaced by $\CIL{n}$ in the lower bound and by \(\CIU{n}\) in the upper bound.
\end{rem}

\subsection{Critical Population for the Untargeted Attack}
In this section, we derive the \textit{critical population} size for a given biometric system. 
Specifically, we establish the maximum number of users that a database can securely manage while satisfying both security constraints and system parameters, such as the \FMR\ and the collision probability.


\begin{theorem}[Critical Population Against Untargeted Attacker]
  \label{cor:popUA}
  Consider a biometric system operating with an error rate \(\FMR\), estimated as \(\FMRe\) based on \(n\) comparisons at a significance level \(\alpha\). The \textit{critical population} \(N\), such that the median number of rounds required by an untargeted attacker to successfully impersonate any user is smaller than a given threshold \(S\), is given by:
  \[
    N = \frac{\ln 2}{S \left( \FMRe + \FMRe^2 \right)}.
  \]
  When accounting for the confidence interval, \(\FMRe\) is replaced by the upper bound \(\CIU{n}\).
\end{theorem}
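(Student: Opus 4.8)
The plan is to derive this as a direct corollary of the independence case of Theorem~\ref{thm:Approx_untargeted}, inverting the pessimistic bound on the median number of rounds. First I would recall that, under the independence assumption, the proof of Theorem~\ref{thm:Approx_untargeted} yields
\[
m_{out} \geq \frac{\ln 2}{N\left(\FMR + \FMR^2\right)},
\]
and that, as stressed in the discussion following that theorem, this lower bound is exactly the quantity to control for a conservative guarantee: it is the smallest number of rounds the attacker could possibly need, so keeping it large forces the true median to remain large as well.

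Next I would encode the security requirement as a constraint on $N$. Writing $S$ for the target threshold on the median number of rounds, a sufficient condition for the system to guarantee $m_{out} \geq S$ is to force the lower bound itself to stay above $S$, namely
\[
\frac{\ln 2}{N\left(\FMR + \FMR^2\right)} \geq S.
\]
Since the left-hand side is strictly decreasing in $N$, this inequality holds for all populations up to a single boundary value. Taking the equality and performing a one-line rearrangement for $N$ gives the largest admissible population (the \emph{critical} population)
\[
N = \frac{\ln 2}{S\left(\FMR + \FMR^2\right)},
\]
and substituting the empirical estimate $\FMRe$ for $\FMR$ produces the stated closed form.

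Finally I would justify the confidence-interval refinement. The critical population is monotonically decreasing in the error rate appearing in the denominator, so the safest admissible value inside the confidence interval is its upper endpoint; replacing $\FMRe$ by $\CIU{n}$ therefore yields the smallest, most conservative critical population. The work here is essentially bookkeeping rather than analysis, and the only delicate points are directional: verifying that Case~1's lower bound (and not its upper bound, nor either bound of the non-independence case) is the correct pessimistic quantity, that $N \mapsto m_{out}$ is monotone so the boundary value is genuinely the maximal secure population, and that conservatism selects $\CIU{n}$ over $\CIL{n}$.
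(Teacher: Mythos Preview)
Your proposal is correct and follows essentially the same approach as the paper: invoke the Case~1 lower bound of Theorem~\ref{thm:Approx_untargeted}, impose $S \leq \frac{\ln 2}{N(\FMRe+\FMRe^2)}$, solve for $N$, and replace $\FMRe$ by $\CIU{n}$ for the conservative estimate. Your version is in fact more carefully argued than the paper's, which simply states the inequality and rearranges without explicitly justifying the choice of Case~1, the monotonicity in $N$, or the selection of $\CIU{n}$ over $\CIL{n}$.
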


\begin{proof}
  Let $S$ be the security desired (\textit{e.g.}, $2^{128}$) for a biometric system with a performance estimated by $\FMRe$ as before. Thus, using the smallest lower bound from Theorem~\ref{thm:Approx_untargeted} (\textit{i.e.}, the lower bound in the first case), we require:
  \begin{eqnarray*}
    S & \leq & \dfrac{\ln 2}{N\cdot\left(\CIU{n}+\CIU{n}^2\right)}\\
  \end{eqnarray*}
  which may be rewritten as
  \begin{eqnarray*}
  N & \leq & \dfrac{\ln 2}{S\left(\CIU{n}+\CIU{n}^2\right)}\\
\end{eqnarray*}
and the result follows.
\end{proof}

The theorem shows that the \textit{critical population} size \( N \) is inversely proportional to the security threshold \( S \) and the term \(\CIU{n} + \CIU{n}^2\). 
Increasing \( S \) enhances system security by requiring more attempts for an attacker to successfully impersonate a user. However, this comes at the expense of a reduced \( N \), imposing stricter limits on the system's scalability. 
While higher values of \( S \) may be offset by reducing \(\CIU{n} + \CIU{n}^2\), practical constraints on the \(\FMR\) limit this compensation. 

\begin{rem}
  If the \textit{critical population} size \( N \) falls below \( 1 \), the desired security threshold \( S \) is unattainable with the system's current accuracy, indicating the need for significant improvements in system performance or a relaxation of the security requirements. 
  When \( N \) lies between \( 1 \) and \( 2 \), the system ensures adequate security for single-user scenarios and should be consistent with the targeted attack. 
  Notably, if \( N \) ranges from \( 3 \) to \( 10 \), the system is capable of securely managing small-scale deployments. 
  Such scenarios are typical in consumer devices like smartphones, where biometric systems frequently support the enrollment of \( 5 \) to \( 10 \) fingerprints for user authentication or identification. 
\end{rem}

\subsection{Critical $\FMRe$ for the Untargeted Attack}

As shown in the previous section, the \FMR\ plays a key role in ensuring that the median number of attempts required for an attacker to succeed aligns with the desired security level for a system supporting a given number of users.  
In this section, we shift our focus to determining the necessary error rate (\(\FMR\)) to ensure that a biometric system managing \(N\) users achieves \(S\) bits of security against \textit{untargeted attacks}.  
This analysis provides practical guidelines for configuring system performance parameters, particularly for large-scale deployments where security and scalability must be carefully balanced.

\begin{theorem}[Critical Estimated $\FMR$]
  \label{cor:fmr_untargeted}
Given a database with \(N\) users and a security level \(S\) against the \textit{untargeted attack}, the biometric error rate \(\FMR\), estimated as \(\FMRe\) from \(n\) comparisons with a significance level \(\alpha\), must satisfy:
  $\FMRe \leq \sqrt{\frac{1}{4}+\frac{\ln 2}{N\cdot S}}-\frac{1}{2}$.
Accounting for the confidence interval, $\FMRe$ is replaced by $\CIU{n}$.
\end{theorem}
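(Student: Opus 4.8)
The plan is to invert the pessimistic security bound established in Theorem~\ref{thm:Approx_untargeted}. To guarantee a security level $S$, we require the median number of rounds $m_{out}$ to satisfy $m_{out} \geq S$, and the most conservative guarantee stems from the lower bound of the independence case (the first case), namely $m_{out} \geq \frac{\ln 2}{N(\FMRe + \FMRe^2)}$. First I would impose the security requirement
\[
S \leq \frac{\ln 2}{N\left(\FMRe + \FMRe^2\right)},
\]
exactly as in the proof of Theorem~\ref{cor:popUA}, but now treating $N$ and $S$ as fixed and solving for the error rate rather than for the population size.

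Next I would clear the denominator (all quantities being positive) to obtain the quadratic constraint $\FMRe^2 + \FMRe - \frac{\ln 2}{N\cdot S} \leq 0$. Writing $c = \frac{\ln 2}{N\cdot S} > 0$ and $x = \FMRe$, this reads $x^2 + x - c \leq 0$, a parabola opening upward whose roots are $x_{\pm} = \frac{-1 \pm \sqrt{1 + 4c}}{2}$. Hence the inequality holds precisely for $x \in [x_-, x_+]$, with upper endpoint $x_+ = \sqrt{\tfrac14 + c} - \tfrac12$.

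Then I would invoke the fact that $\FMRe$ is a probability, so $\FMRe \geq 0$; this discards the negative root $x_-$ and collapses the admissible interval to $0 \leq \FMRe \leq \sqrt{\tfrac14 + \tfrac{\ln 2}{N\cdot S}} - \tfrac12$, where the upper bound is genuinely positive because $c > 0$ forces $\sqrt{\tfrac14 + c} > \tfrac12$. This yields the claimed bound. Finally, to absorb estimation uncertainty, I would substitute $\FMRe$ by its upper confidence bound $\CIU{n}$, exactly as in Theorem~\ref{thm:Approx_untargeted} and Theorem~\ref{cor:popUA}, so that the security guarantee holds uniformly across the confidence interval.

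The computation is elementary, so I anticipate no serious obstacle; the only point requiring genuine care is the justification for retaining only the positive root. This is precisely where the probabilistic meaning of $\FMRe$ enters: since the left endpoint $x_-$ of the solution interval is negative, the constraint $\FMRe \geq 0$ is what converts a two-sided quadratic condition into the single clean upper bound stated. A worthwhile consistency check is to solve the relation $N = \frac{\ln 2}{S(\FMRe + \FMRe^2)}$ of Theorem~\ref{cor:popUA} for $\FMRe$ and verify it reproduces the same root, confirming that the two critical indicators are mutually coherent.
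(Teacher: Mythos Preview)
Your proposal is correct and follows essentially the same approach as the paper: starting from the lower bound of Theorem~\ref{thm:Approx_untargeted}, rearranging into the quadratic $\FMRe^2+\FMRe-\frac{\ln 2}{N\cdot S}\leq 0$, solving for the roots, and retaining the positive one via $\FMRe\geq 0$. Your write-up is in fact a bit more explicit than the paper's in justifying the root selection and in noting the consistency with Theorem~\ref{cor:popUA}.
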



\begin{proof}
Using Theorem~\ref{thm:Approx_untargeted}, the security level \(S\) satisfies 
\(S \leq \frac{\ln 2}{N (\FMR + \FMR^2)}\), leading to the quadratic inequality 
\(\FMR^2 + \FMR - \frac{\ln 2}{N \cdot S} \leq 0\). 
Studying the function \(f(x) = x^2 + x - \frac{\ln 2}{N \cdot S}\) and solving for its roots, we find 
\(\FMR = \frac{-1 \pm \sqrt{1 + \frac{4 \ln 2}{N \cdot S}}}{2}\). 
Since \(\FMR \geq 0\), we select the positive root, giving 
\(\FMR \leq \sqrt{\frac{1}{4} + \frac{\ln 2}{N \cdot S}} - \frac{1}{2}\). 
Hence, the result follows.
\end{proof}

The inequality shows that the upper confidence interval for the $\FMRe$, \(\CIU{n}\), is fundamentally constrained by both the desired number of users, \(N\), and the desired security level, \(S\). 
As the product \(N \cdot S\) increases, \(\CIU{n}\) diminishes. 
Importantly, \(N\) and \(S\) have equivalent influence on reducing \(\CIU{n}\); neither parameter dominates the other, as both contribute equally to the requirement for a smaller \(\FMRe\). 
Consequently, achieving higher \(N\) or higher \(S\) necessitates reducing \(\CIU{n}\), which depends on improving the \(\FMR\) and the statistical confidence of its estimation.

\begin{rem}[Confidence Paradox]
  It is important to emphasize the problem associated with obtaining a tight confidence interval on the \(\FMR\). For the upper bound \(\CIU{n}\) to be meaningful, it must be smaller than the right-hand side of the inequality in Theorem~\ref{cor:fmr_untargeted}. Simple approximations reveal that the number of comparisons, \(n\), required to achieve this is roughly proportional to \(N \times S\), where \(N\) is the population size and \(S\) the desired security level. 

  This dependency leads to an exponential growth in the number of comparisons as \(N\) or \(S\) increases. For example, a system designed to handle \(1000\) users while achieving \(50\) bits of security would require approximately \(2^{498}\) or \(10^{150}\) comparisons, which is computationally infeasible.

  Consequently, in such cases, confidence intervals on the \(\FMR\) must be disregarded. This limitation implies that no guarantees on the required \(\FMR\) can be reliably established, regardless of the chosen confidence level \(\alpha\). 
\end{rem}

\vspace{-3mm}

\subsection{Numerical Evaluations}

This section presents a detailed numerical evaluation of critical parameters for biometric systems under untargeted attacks. We estimate the \(\FMR\) required to maintain a specified level of security for a fixed population size, \(N\). Note that, confidence intervals on the \(\FMR\) are omitted, in alignment with earlier observations (see Appendix~\ref{Annexe:ICimpact} for an insight on the impact of the confidence interval).

Most contemporary systems operate with an \(\FMR\) around \(10^{-6}\). 
Figure~\ref{fig:critical_fmr_untar} provides further insights. 
In the absence of specific security constraints, an attacker making four attempts to impersonate an individual limits the maximum manageable population size to approximately \(10^6\). 
For small databases with only $10$ users, the maximum achievable security level is $10$ bits (\(2^{10}\) attempts required to compromise the system). 

To meet cryptographic standards, systems must achieve at least $112$ bits of security, with $128$ bits becoming mandatory from $2030$ onward~\cite{barker2015profile,barker2020Futur}. 
For small-scale databases (\textit{e.g.}, $10$ users and $112$ bits of security), achieving this standard requires an \(\FMR \leq 10^{-35}\). 
For large-scale systems accommodating \(10^9\) users and requiring $128$ bits of security, the required \(\FMR\) decreases further to \(10^{-45}\). 

These results underscore the significant challenges biometric systems face in balancing scalability and cryptographic security. 
Achieving such low \(\FMR\) values poses technical and computational hurdles, emphasizing the importance of advancing biometric technologies to meet these stringent requirements.


\begin{figure}
  \centering
  \includegraphics[width=\columnwidth]{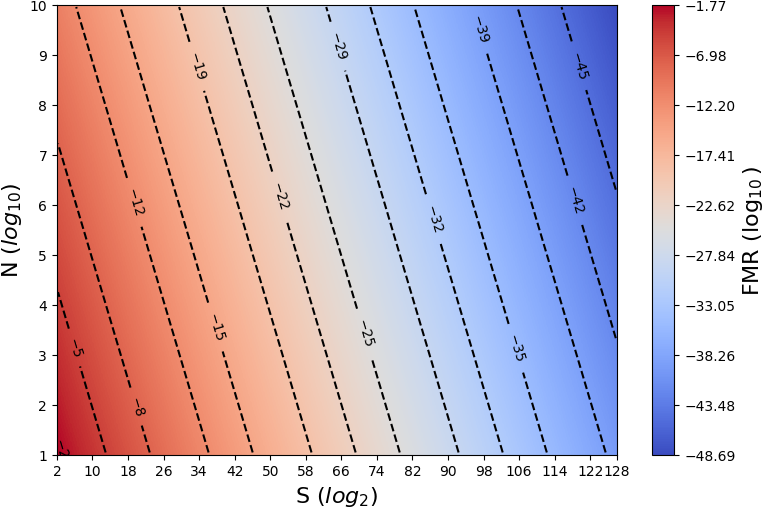}
\caption{Critical $\FMR$ as a function of the population size $N$ and the security level $S$ against \textit{untargeted attacks}.}
  \label{fig:critical_fmr_untar}
  \vspace{-3mm}
\end{figure}

\section{On the Weak Collisions}
\label{sec:coll}
This section investigates the occurrence of collisions in biometric identification systems. 
Collisions are particularly problematic as they increase the probability of user misidentification, compromising the reliability of biometric-based authentication. 
This issue is especially critical in access control systems with hierarchical privileges, where a collision could lead to unauthorized access to sensitive areas or information.

We begin by analyzing the probability of collisions in both approximate and exact cases. 
For the approximate case, we build upon Daugman’s framework~\cite{Daugman24Understand}, providing a rigorous mathematical estimation of collision probabilities. 
For the exact case, we derive the collision probability through detailed probabilistic modeling. 
Additionally, we identify the \textit{critical population} size at which the collision probability surpasses a defined threshold. 
Furthermore, we derive the necessary \(\FMRe\) (\textit{Critical \FMR}) to ensure a given maximal probability of collision for a system with \(N\) users. 
Each of these points is supported by extensive experiments, with tables and graphs illustrating the results and complementing the derived theoretical bounds.

\subsection{Approximated Biometric Birthday Problem}
\label{sec:Approx_bith}

In this section, we derive an approximation for the probability of a collision in a biometric database, leveraging the concept of the \emph{biometric birthday problem}, as proposed by various authors~\cite{daugman2003importance,daugman2015information,daugman2016searching,Daugman24Understand}.

The \textit{biometric birthday problem} extends the classical \textit{birthday problem} (see~\cite{mckinney1966generalized,Frank1991Birthday,Borja2007Birthday} for more detailed) to biometric systems and is formally stated as follows: \emph{"Given a biometric system with a known \(\FMR\), how many randomly selected individuals must be considered before the probability of at least one pair experiencing a biometric collision (\textit{i.e.}, a false match) exceeds a given threshold p?"}

\subsubsection{An Approximation for the Biometric Birthday Problem.}
First, we revisit Daugman's result~\cite{Daugman24Understand} (Theorem~\ref{Daugmanthm}) and provide a rigorous proof to substantiate his claims.

\begin{theorem}[Approximated Biometric Birthday Problem]
  \label{Daugmanthm}
  Given a biometric system operating with an error rate $\FMR$, an approximation of the probability that there is at least one collision in a database of $N$ user is given by $\mathbb{P}(K) \approx 1-(1-\FMR)^{\Ncomb}$ with $\Ncomb=\frac{N(N-1)}{2}$.
\end{theorem}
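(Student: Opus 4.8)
The plan is to reduce the weak-collision event to a statement about the $\binom{N}{2}$ unordered pairs of users and then apply the strong independence assumption to factor the probability of the complementary event, exactly mirroring the pair-counting heuristic that produces the classical birthday approximation. First I would fix a database of users $1,\dots,N$ and, for each pair $1 \le i < j \le N$, introduce the event $C_{ij}$ that the templates $b_i$ and $b_j$ form a false match. By the definition of the $\FMR$ as the probability that two distinct templates are incorrectly matched (the strong-collision probability introduced in the preliminaries), each pair satisfies $\mathbb{P}(C_{ij}) = \FMR$, and hence $\mathbb{P}(\overline{C_{ij}}) = 1 - \FMR$.

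Next I would rewrite the target event. The event $K$ that the database contains at least one weak collision is precisely the union $K = \bigcup_{i<j} C_{ij}$, so its complement is the intersection $\overline{K} = \bigcap_{i<j}\overline{C_{ij}}$, i.e.\ the event that no pair collides. Counting the pairs yields $\binom{N}{2} = \frac{N(N-1)}{2} = \Ncomb$ factors. Invoking Daugman's strong independence assumption --- that the collision events of distinct pairs are mutually independent --- the probability of the intersection factors as a product, giving $\mathbb{P}(\overline{K}) = \prod_{i<j}\mathbb{P}(\overline{C_{ij}}) = (1-\FMR)^{\Ncomb}$. Taking the complement then yields $\mathbb{P}(K) = 1 - (1-\FMR)^{\Ncomb}$, which is the claimed approximation.

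The main obstacle --- and the precise reason the statement is only an approximation --- is the independence step. Pairs sharing a common user, such as $C_{ij}$ and $C_{ik}$, both depend on the single template $b_i$ and are therefore generally correlated, so the joint law of the $\Ncomb$ pairwise events does not factor exactly; this residual dependence is exactly what the exact formulation developed later in this section is meant to capture. In writing the proof I would state the independence hypothesis explicitly at the point where the product is taken, so that the identity is presented as a rigorous consequence of Daugman's stated assumption rather than as an exact claim about the true collision probability.
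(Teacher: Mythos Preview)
Your proposal is correct and essentially identical to the paper's proof: both enumerate the $\Ncomb=\binom{N}{2}$ unordered pairs, assign each non-match event probability $1-\FMR$, invoke the independence assumption to factor the intersection, and take the complement. The only cosmetic difference is that the paper first writes the chain rule $\mathbb{P}\bigl(\bigcap_i M_i\bigr)=\mathbb{P}(M_1)\prod_{i\ge 2}\mathbb{P}\bigl(M_i\mid\bigcap_{j<i}M_j\bigr)$ and then explicitly marks the independence approximation as the step where the conditioning is dropped, whereas you state the hypothesis up front and pass directly to the product.
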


\begin{proof}
Let \(M_i\) represent the event "The individuals within the \(i\)-th pair of \(\overline{\mathcal{P}}\) do not falsely match," \(Q(N)\) the probability of the event \emph{"There is no false match among \(N\) pairs,"} and \(\mathbb{P}(N)\) the probability of the event \emph{"There is at least one false match among \(N\) pairs."} We have \(\mathbb{P}(M_i) = \FMR\) and \(\mathbb{P}(K) = 1 - Q(K)\).

Let \(\overline{\mathcal{P}}\) denote the set of all possible unordered pairs among \(N\) individuals. 
Then, the total number of pairs of individuals is given by $|\overline{\mathcal{P}}| = \dbinom{N}{2}  = \dfrac{N!}{2!(N-2)!} = \dfrac{N(N-1)}{2} = \Ncomb$.
Then, we have:
\begin{eqnarray*}
    Q(N) = \mathbb{P}\left(\bigcap\limits_{i=1}^{|\overline{\mathcal{P}}|} M_i\right) & = & \mathbb{P}\left(M_1\right)\times \prod\limits_{i=2}^{\Ncomb}\mathbb{P}\left(M_i \big |\bigcap\limits_{j=1}^{i-1} M_j\right) \\
  & \approx & \mathbb{P}\left(M_1\right)\times \prod\limits_{i=2}^{\Ncomb}\mathbb{P}\left(M_i\right).\\
\end{eqnarray*}
The above approximation is derived by assuming that \(M_i\) is independent of \(M_j\) for all \(i > j\), which is an important assumption in the result. 
It can be deduced that $Q(N) \approx (1-\FMR)^{\Ncomb}$ by using the fact that $\forall i \in \lbrace 1,\dots,\Ncomb\rbrace$, $\mathbb{P}\left(M_i\right) = 1-\FMR$.
Then, we have that $\mathbb{P}(K) \approx 1-(1-\FMR)^{\Ncomb}$ which explains and correspond to the result given by Daugman~\cite{Daugman24Understand} for the \textit{biometric birthday}.
\end{proof}

Then, Theorem~\ref{thm:Approx_bio} extend Daugman's results by incorporating confidence bounds on the $\FMR$, enabling a more accurate estimation of this probability.

\begin{theorem}[Approximated Solution for the Biometric Birthday Problem]
  \label{thm:Approx_bio}
  Given a biometric system operating with an error rate $\FMR$ estimated by $\FMRe$ on $n$ comparisons with a significance level $\alpha$, an approximation of the probability that there is at least one collision in a database of $N$ user is between
  \begin{equation*}
    \Omega\left(1-\left(1-\FMRe\right)^{\Ncomb}\right) \text{ and } O\left(1-\left(1-\FMRe\right)^{\Ncomb}\right)
  \end{equation*}
  with $\Ncomb=\frac{N(N-1)}{2}$.
Accounting for the confidence interval, $\FMRe$ is replaced by $\CIL{n}$ in the lower bound and by \(\CIU{n}\) in the upper bound.
\end{theorem}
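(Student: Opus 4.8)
The plan is to treat this statement as a monotone transfer of the approximation already established in Theorem~\ref{Daugmanthm}, combined with the confidence interval constructed in Section~\ref{FMR_def}. The base case, in which the point estimate $\FMRe$ is used directly, is immediate: Theorem~\ref{Daugmanthm} gives $\mathbb{P}(K) \approx 1-(1-\FMR)^{\Ncomb}$, so substituting the estimate $\FMRe$ for the true rate yields $1-(1-\FMRe)^{\Ncomb}$, which is trivially both $\Omega$ and $O$ of itself. The nontrivial content is the confidence-interval refinement, and for that the key observation is that the collision probability depends on the rate only through the scalar map $g(x) = 1-(1-x)^{\Ncomb}$, so it suffices to push the bracketing bounds $\CIL{n} \leq \FMR \leq \CIU{n}$ through $g$.

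First I would invoke Theorem~\ref{Daugmanthm} to write the approximation $\mathbb{P}(K) \approx g(\FMR)$ with $g(x) = 1-(1-x)^{\Ncomb}$ and $\Ncomb = \frac{N(N-1)}{2}$. Second, I would establish that $g$ is nondecreasing on $[0,1]$: differentiating gives $g'(x) = \Ncomb\,(1-x)^{\Ncomb-1} \geq 0$ for every $x \in [0,1]$, since $\Ncomb \geq 0$ and $1-x \in [0,1]$; equivalently, $x \mapsto (1-x)^{\Ncomb}$ is nonincreasing, so its complement is nondecreasing. Third, I would apply this monotonicity to the confidence bounds $\CIL{n} \leq \FMR \leq \CIU{n}$, which hold at the chosen significance level $\alpha$, to obtain
\[
1-\left(1-\CIL{n}\right)^{\Ncomb} \;\leq\; g(\FMR) \;\leq\; 1-\left(1-\CIU{n}\right)^{\Ncomb}.
\]
Substituting the approximation for $\mathbb{P}(K)$ back into this chain yields exactly the stated lower ($\Omega$) and upper ($O$) bounds, with $\CIL{n}$ and $\CIU{n}$ playing the roles claimed.

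The step I expect to be the only genuine subtlety is interpretational rather than computational: I must be careful that the symbols $\Omega$ and $O$ here denote the confidence-based lower and upper envelopes of the approximation, not asymptotic classes, and that the approximation error inherited from Daugman's independence assumption in Theorem~\ref{Daugmanthm} propagates unchanged through the monotone substitution. In particular, both endpoints enclose the approximate quantity $1-(1-\FMR)^{\Ncomb}$ rather than the exact collision probability, so I would state the ``$\approx$'' explicitly at both ends of the interval to avoid the reader mistaking the result for an exact two-sided bound. All remaining manipulations are the routine monotonicity argument above and require no further estimates.
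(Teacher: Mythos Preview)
Your proposal is correct and follows essentially the same approach as the paper: invoke Theorem~\ref{Daugmanthm} and substitute the confidence-interval endpoints from Section~\ref{FMR_def} for $\FMR$. The paper's proof is terser---it simply writes $\mathbb{P}(K) \approx 1-\bigl(1-\FMRe \pm c_\alpha\sqrt{\FMRe(1-\FMRe)/(n-1)}\bigr)^{\Ncomb}$ and declares the result---whereas you make the monotonicity of $g(x)=1-(1-x)^{\Ncomb}$ explicit; this is a welcome clarification but not a different argument.
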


\begin{proof}
  Using Theorem~\ref{Daugmanthm} and Section~\ref{FMR_def}, we have $\mathbb{P}(K)  \approx 1-\left(1-\FMRe \pm c_\alpha\sqrt{\dfrac{\FMRe(1-\FMRe)}{n-1}}\right)^{\Ncomb}$ with $n$ the number of 
comparisons used to estimate the approximated $\FMR$ (\textit{i.e.,} $\FMRe$) and $c_\alpha$ the quantile of the Student's distribution with $n-1$ degrees of freedom. The result follows.
\end{proof}

As the number of pairwise comparisons, \(\Ncomb = N(N-1)/2\), increases quadratically with the number of users \(N\), the exponential terms \((1-\CIL{n})^{\Ncomb}\) and \((1-\CIU{n})^{\Ncomb}\) rapidly converge to zero as \(N\) grows. 
Consequently, the probability of a collision converges exponentially towards $1$ when the number of user increases. 
This confirm the intuition that larger databases are inherently more susceptible to collisions, underscoring the importance of a small \(\FMR\) in systems with a high number of user.

\subsubsection{Approximated Critical Population}

%
In this section, we investigate the notion of \textit{critical population}, which refers to the maximum number of users that a biometric system can support while maintaining a specified \(\FMR\) and collision probability. This concept provides practical guidelines for system design where collision risks must be mitigated.

Theorem~\ref{thm:Critical_bio} formalizes this relationship and establishes a rigorous framework for analyzing the trade-off between security requirements and user capacity.

\begin{theorem}[Approximated Critical Population]
  \label{thm:Critical_bio}
Given a biometric system operating with an error rate \(\FMR\), estimated as \(\FMRe\) from \(n\) comparisons at a significance level \(\alpha\), the maximum number of users \(N\) such that the probability of a collision is smaller than a chosen threshold \(p\) is $N \approx \sqrt{\dfrac{2\ln(1-p)}{\ln\left(1-\CIL{n}\right)}}$.
\end{theorem}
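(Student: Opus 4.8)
The plan is to invert the approximate collision probability furnished by Theorem~\ref{thm:Approx_bio}. Starting from the Daugman-type approximation $\mathbb{P}(K) \approx 1 - (1-\FMR)^{\Ncomb}$ with $\Ncomb = N(N-1)/2$, I would impose the design constraint $\mathbb{P}(K) \leq p$, which rearranges to $(1-\FMR)^{\Ncomb} \geq 1 - p$. The objective is then to solve this inequality for the largest admissible $N$, which is exactly the \emph{critical population}.

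First I would take natural logarithms of both sides. Because $0 < \FMR < 1$ forces $\ln(1-\FMR) < 0$, dividing by $\ln(1-\FMR)$ reverses the inequality, yielding $\Ncomb \leq \ln(1-p)/\ln(1-\FMR)$; note that both $\ln(1-p)$ and $\ln(1-\FMR)$ are negative, so the right-hand side is positive, as required. Substituting $\Ncomb = N(N-1)/2$ then gives $N(N-1) \leq 2\ln(1-p)/\ln(1-\FMR)$.

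The next step is to solve the resulting quadratic constraint for $N$. Writing $c = 2\ln(1-p)/\ln(1-\FMR)$, the boundary equation $N^2 - N - c = 0$ has positive root $N = \bigl(1 + \sqrt{1+4c}\bigr)/2$. In the regime of interest---large databases where $c$ is large---the linear term is negligible, so $N(N-1) \approx N^2$ and the root simplifies to $N \approx \sqrt{c} = \sqrt{2\ln(1-p)/\ln(1-\FMR)}$. Finally, to incorporate the measurement uncertainty on the error rate, I would replace $\FMR$ by its confidence bound $\CIL{n}$, exactly as prescribed in Theorem~\ref{thm:Approx_bio}, producing the claimed expression $N \approx \sqrt{2\ln(1-p)/\ln(1-\CIL{n})}$.

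The hard part will be justifying the approximation $N(N-1) \approx N^2$ and being explicit about which confidence endpoint to substitute. Using $\CIL{n}$ (the lower endpoint) yields the largest---most optimistic---capacity estimate, since a smaller assumed error rate permits more users before collisions become likely; a genuinely conservative design would instead substitute $\CIU{n}$. Care is also needed to track the inequality direction through the logarithm, since all relevant quantities $1-p$, $1-\FMR$, and $1-\CIL{n}$ lie in $(0,1)$ and therefore have negative logarithms.
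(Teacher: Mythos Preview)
Your proposal is correct and follows essentially the same route as the paper: invoke Theorem~\ref{thm:Approx_bio}, impose $1-(1-\FMR)^{\Ncomb}\leq p$, take logarithms to obtain the quadratic $N^2-N-c\leq 0$ with $c=2\ln(1-p)/\ln(1-\FMR)$, extract the positive root, and drop the linear term for large $c$. Your sign tracking is in fact more explicit than the paper's, and your closing remark that substituting $\CIL{n}$ gives the optimistic capacity (whereas $\CIU{n}$ would be conservative) is a worthwhile observation that the paper does not make.
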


\begin{proof}
  If we define $p$ as the maximum probability that a collision occur, the following inequality can be written using Theorem~\ref{thm:Approx_bio}:
  \begin{align*}
    1-\left(1-\CIL{n}\right)^{\Ncomb} & \leq p \\
    N^2 - N - 2\dfrac{\ln(1-p)}{\ln\left(1-\CIL{n}\right)} & \leq 0.\\
  \end{align*}
  The study of the above function in $N$ yields 
  \begin{align*}
    N \leq \dfrac{1}{2}+\dfrac{1}{2}\sqrt{1+\dfrac{8\ln(1-p)}{\ln\left(1-\CIL{n}\right)}} \approx \sqrt{\dfrac{2\ln(1-p)}{\ln\left(1-\CIL{n}\right)}}\\
  \end{align*}
  and the result follows.
\end{proof}


Theorem~\ref{thm:Critical_bio} provides an approximation for the \textit{critical population} size, \(N\), given a target collision probability, \(p\), and the lower confidence bound \(\CIL{n}\) on the \(\FMR\). 
This result emphasizes the relationship between collision probabilities, system accuracy, and user capacity. 

It is worth noting that \(\FMR\) and \(p\) have opposite effects on the size of \(N\).
As \(p\) decreases, reflecting tighter security requirements, the \textit{critical population} \(N\) diminishes. 
Conversely, looser requirements on \(p\) allow \(N\) to grow. 
On the other hand, reducing \(\FMR\) allows the system to accommodate a larger population, while higher \(\FMR\) values result in a decrease in population size. 
The delicate balance between security requirements and system scalability is underscored by this duality.

\subsubsection{Required $\FMRe$ for Target Population Sizes}
A complementary approach to the collision problem involves reversing the perspective.
Instead of determining the \textit{critical population} size \(N\) for a given \(\FMRe\), the question becomes:
for a target population size \(N\), what upper bound on \(\FMRe\) is required to ensure that the probability of a collision remains below a specified threshold \(p\)? 
This alternative perspective offers practical insights into the required system accuracy for handling large user bases at a given probability of collision. As a result, we formalize this relationship in Theorem~\ref{thm:Critical_FMR}.
\begin{theorem}[Approximated Critical \FMRe]
  \label{thm:Critical_FMR}
  Given a database with $N$ user and $p$ the chosen maximal probability of a collision, the biometric error rate $\FMR$ estimated by $\FMRe$ on $n$ comparisons must be such that $\FMRe \leq 1-e^{\ln(1-p)/\Ncomb}$.
\end{theorem}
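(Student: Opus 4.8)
The plan is to start from the approximate collision probability established in Theorem~\ref{thm:Approx_bio}, namely $\mathbb{P}(K) \approx 1 - (1-\FMRe)^{\Ncomb}$ with $\Ncomb = N(N-1)/2$, and impose the design constraint that this probability stay below the chosen threshold $p$. This gives the inequality $1 - (1-\FMRe)^{\Ncomb} \leq p$, which is exactly the same starting point used in the proof of Theorem~\ref{thm:Critical_bio}, except that now $\FMRe$ is the unknown to be bounded rather than $N$.

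First I would rearrange the inequality to isolate the power term, obtaining $(1-\FMRe)^{\Ncomb} \geq 1-p$. Since $\Ncomb > 0$ and $1-p > 0$ (assuming $0 \leq p < 1$), I can take the $\Ncomb$-th root of both sides while preserving the direction of the inequality, yielding $1-\FMRe \geq (1-p)^{1/\Ncomb}$. Equivalently, writing the root as an exponential, $1-\FMRe \geq e^{\ln(1-p)/\Ncomb}$. Solving for $\FMRe$ then gives $\FMRe \leq 1 - e^{\ln(1-p)/\Ncomb}$, which is precisely the claimed bound.

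The reasoning is elementary once the direction of the inequalities is tracked carefully: the main point requiring care is that raising to the power $1/\Ncomb$ is an increasing function on the positive reals, so the inequality $(1-\FMRe)^{\Ncomb} \geq 1-p$ is equivalent to $1-\FMRe \geq (1-p)^{1/\Ncomb}$ without any flip. There is no genuine obstacle here; unlike the proof of Theorem~\ref{thm:Critical_bio}, no quadratic needs to be solved and no approximation step is introduced beyond the one already inherited from Theorem~\ref{Daugmanthm}. To account for the confidence interval, I would, in the final line, replace $\FMRe$ by $\CIU{n}$ exactly as in the preceding theorems, so that the bound guarantees the collision probability stays below $p$ even at the upper end of the confidence interval on the $\FMR$.
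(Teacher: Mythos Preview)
Your argument is correct and essentially identical to the paper's: the paper starts from the same inequality $1-(1-\cdot)^{\Ncomb}\le p$, takes the natural logarithm, divides by $\Ncomb$, and exponentiates, which is exactly your ``take the $\Ncomb$-th root'' step written out in two moves. The only discrepancy is the confidence-interval endpoint: the paper works with $\CIL{n}$ (consistently with Theorem~\ref{thm:Critical_bio}) rather than $\CIU{n}$, though your choice of $\CIU{n}$ is arguably the more conservative one for guaranteeing the collision bound.
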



\begin{proof}
Using the same notations as above and Theorem~\ref{thm:Approx_bio}, we start with \( 1 - (1 - \CIL{n})^{\Ncomb} \leq p \). Taking the natural logarithm and rearranging terms, we obtain \( \ln(1 - \CIL{n}) \geq \frac{\ln(1-p)}{\Ncomb} \). By exponentiating both sides, we get: $\CIL{n} \leq 1 - e^{\frac{\ln(1-p)}{\Ncomb}}$.
By observing that \(\FMRe \leq \CIL{n}\), this implies:
$\FMRe \leq 1 - e^{\frac{\ln(1-p)}{\Ncomb}}$.
Yielding \( \FMRe \approx 1 - e^{\frac{\ln(1-p)}{\Ncomb}} \) for the maximal \(\FMRe\) needed if \(n\) is large enough, and the result follows.
\end{proof}

This result provides an upper bound on the estimated \(\FMRe\) required to maintain a collision probability below a specified threshold \(p\) for a given database size \(N\). 
The exponential term \(\exp\left(\frac{\ln(1-p)}{\Ncomb}\right)\) emphasizes the relationship between the collision probability, the population size, and the \(\FMRe\). 
As the number of pairwise comparisons \(\Ncomb = \frac{N(N-1)}{2}\) increases, the required \(\FMRe\) becomes progressively smaller, highlighting the strong accuracy requirements for systems handling large user bases. 
For smaller values of \(p\), reflecting higher security demands, \(\FMRe\) must decrease further to ensure the collision probability remains acceptably low.

\subsubsection{Numerical Evaluations}
To prevent weak collisions in a database with a given probability, we investigate the required \(\FMR\) under specific design constraints. 
If the critical population size is to be derived from the \(\FMR\) and the target collision probability, or conversely, if the collision probability must be determined for a given population size and \(\FMR\), these relationships are conveniently illustrated in the filled contour plot of the figure~\ref{fig:Critical_FMR}.
To avoid redundancy, we focus below on selected numerical evaluations of particular interest.

%
%
First, let us recall that most contemporary systems operate with a false match rate (\(\FMR\)) around \(10^{-6}\). Considering the classical \textit{birthday problem}, where the collision probability is set to \(1/2\), such systems can accommodate approximately one hundred users.

In contrast, large-scale systems, such as those deployed in countries like China, must handle populations approaching \(10^9\) users. To maintain a low collision probability in such cases, the required \(\FMR\) would need to be as low as \(10^{-18}\). Considering the lowest known \(\FMR\) values (approximately \(10^{-9}\)), current large-scale biometric systems inherently face a significant number of collisions.

An ideal biometric system should minimize the collision probability while supporting populations up to \(10^{10}\) users to account for the global human population and effectively mitigate misidentification risks. Achieving this ambitious target would require an \(\FMR\) of \(10^{-20}\), which represents a significant technological challenge for future system designs.


\begin{figure}
  \centering
  \includegraphics[width=\columnwidth]{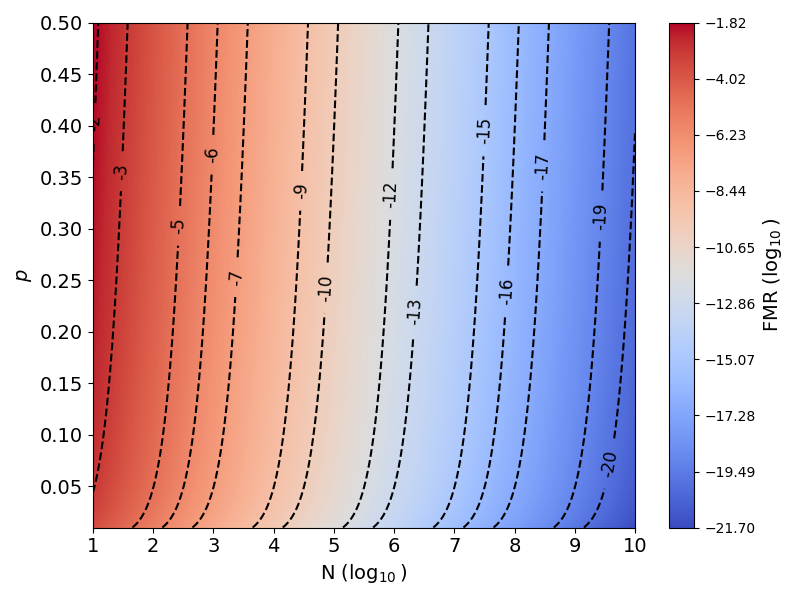}
\caption{The maximal $\FMRe$ (in $\log_{10}$) required to handle a population size $N$ with a maximum allowed probability $p$ for a collision occurrence.}
  \label{fig:Critical_FMR}
  \vspace{-4mm}
\end{figure}

\subsection{The Exact Biometric Birthday Problem}
While the result provided in Section~\ref{sec:Approx_bith} gives an estimation, it is possible to compute the exact probability of the occurrence of a weak collision. 
\begin{theorem}
  \label{thm:real_bio_bith}
  Given a database with \(N\) users and operating with a False Match Rate estimated as \(\FMRe\) from \(K\) comparisons, the probability that there is at least one weak collision is:
  $$\mathbb{P}(N) = 1-(1-\FMR)\dbinom{(1-\FMR)\Kcomb-1}{\Ncomb}\bigg/\dbinom{\Kcomb-1}{\Ncomb}.$$
\end{theorem}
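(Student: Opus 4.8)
The plan is to replace Daugman's per-pair independence assumption with an exact \emph{sampling-without-replacement} (hypergeometric) model. First I would fix the combinatorial universe: the estimation of $\FMRe$ from the reference experiment determines a \emph{fixed} count of false-matching comparisons, namely $\FMR\,\Kcomb$ colliding pairs among the $\Kcomb$ pairs of the reference universe, and correspondingly $(1-\FMR)\Kcomb$ non-matching pairs. This is the crucial modeling move that removes the independence assumption of Theorem~\ref{Daugmanthm}: instead of each of the $\Ncomb=\frac{N(N-1)}{2}$ database pairs colliding independently with probability $\FMR$, the total number of colliding pairs is held fixed, so the events $M_i$ (``the $i$-th database pair is \emph{not} a false match'') become genuinely dependent. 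The $\Ncomb$ pairs induced by the $N$ database users are then modeled as drawn without replacement from this universe.

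Second I would compute $Q(N)=\mathbb{P}(\text{no collision})=\prod_{i}\mathbb{P}\!\left(M_i \mid M_1\cap\cdots\cap M_{i-1}\right)$ via the chain rule, exactly as in the proof of Theorem~\ref{Daugmanthm}, but now evaluating each conditional \emph{exactly} rather than approximating it by $1-\FMR$. For sampling without replacement the first factor is $\mathbb{P}(M_1)=\frac{(1-\FMR)\Kcomb}{\Kcomb}=1-\FMR$, and the $i$-th conditional is $\frac{(1-\FMR)\Kcomb-(i-1)}{\Kcomb-(i-1)}$, since each drawn non-matching pair depletes both the non-matching stock and the total stock by one. The product telescopes into a ratio of falling factorials, which I would then repackage as a ratio of binomial coefficients, pulling the leading factor $(1-\FMR)$ out front to reach $Q(N)=(1-\FMR)\,\binom{(1-\FMR)\Kcomb-1}{\Ncomb}\big/\binom{\Kcomb-1}{\Ncomb}$; the claim then follows from $\mathbb{P}(N)=1-Q(N)$.

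The hard part will be the combinatorial bookkeeping in this last step: the telescoping product of conditionals must be matched term-by-term to the arguments of the binomial coefficients, and the exact placement of the $-1$ shifts together with the factored-out $(1-\FMR)$ leaves no room for an off-by-one in the number of draws. I would also need to note that $(1-\FMR)\Kcomb$ is a genuine integer count of non-matches, so the binomial coefficients are well defined. Finally, as a consistency check I would verify that the formula recovers Theorem~\ref{Daugmanthm}: as $\Kcomb\to\infty$ with $\FMR$ held fixed, sampling without replacement converges to sampling with replacement, every conditional tends to $1-\FMR$, and the product tends to $(1-\FMR)^{\Ncomb}$, confirming that the exact result refines and corroborates Daugman's approximation.
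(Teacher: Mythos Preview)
Your proposal is correct and matches the paper's own proof essentially line for line: the paper likewise applies the chain rule $Q(N)=\mathbb{P}(M_1)\prod_{i=2}^{\Ncomb}\mathbb{P}(M_i\mid \bigcap_{j<i}M_j)$, evaluates each conditional as $\frac{(1-\FMR)\Kcomb-i+1}{\Kcomb-i+1}$ via the sampling-without-replacement depletion argument, and telescopes the product into the stated ratio of binomial coefficients with the leading factor $(1-\FMR)$ pulled out. The paper also records, immediately after the proof, the same $\Kcomb\to\infty$ consistency check against Daugman's approximation that you propose, and in addition notes the edge case $Q(N)=0$ when $(1-\FMR)\Kcomb-1\leq\Ncomb$.
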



\begin{proof}
  Let $K$ denote the number of users used to compute de $\FMRe$. Recall that we have $$\FMRe = \dfrac{\text{Number of pairs which falsely match}}{\text{Total number of pairs}}$$ where the total number of pairs is $\Kcomb=\dfrac{K(K-1)}{2}$, $$\FMR = \FMRe \pm c_\alpha\sqrt{\dfrac{\FMRe(1-\FMRe)}{\Kcomb}}$$ and  
\begin{eqnarray*}
  Q(N) & = & \mathbb{P}\left(\bigcap\limits_{i=1}^{\Ncomb} M_i\right) 
  = \mathbb{P}\left(M_1\right)\times \prod\limits_{i=2}^{\Ncomb}\mathbb{P}\left(M_i \big |\bigcap\limits_{j=1}^{i-1} M_j\right).\\
\end{eqnarray*}
Note that the probability \(\mathbb{P}(M_{i+1} \mid \bigcap_{j=1}^{i} M_j)\) is the ratio of the number of remaining pairs that do not falsely match to the total number of remaining pairs. Specifically, the number of remaining pairs that do not falsely match is \(\Kcomb(1-\FMR) - i + 1\), while the total number of remaining pairs is \(\Kcomb - i + 1\).
Then, we have:
\begin{eqnarray*}
  Q(N) & = & \mathbb{P}\left(M_1\right)\times \prod\limits_{i=2}^{\Ncomb}\mathbb{P}\left(M_i \big |\bigcap\limits_{j=1}^{i-1} M_j\right)\\
  & = & (1-\FMR)\prod_{i=2}^{\Ncomb} \dfrac{(1-\FMR)\Kcomb-i+1}{\Kcomb-i+1}\\
  & = & (1-\FMR)\dbinom{(1-\FMR)\Kcomb-1}{\Ncomb}\bigg/\dbinom{\Kcomb-1}{\Ncomb}\\
\end{eqnarray*}
and the result follows.
It is important to note that if $(1-\FMR)\Kcomb-1 \leq \Ncomb$ then, $Q(N)=0$, as there are not enough remaining pairs to satisfy the condition.
\end{proof}




The collision probability derived in Theorem~\ref{thm:real_bio_bith} is influenced by three key parameters: the False Match Rate (\(\FMR\)), the desired number of users (\(N\)), and the number of users used for estimating the \(\FMR\) (\(K\)).

As the \(\FMR\) increases, the probability of a collision also increases. Conversely, as \(\FMR\) approaches zero, both the upper and lower bounds of the collision probability tend to zero, indicating a lower probability of collision with higher system accuracy.

It is important to note that when \(\FMR\) is nonzero, and given that it is computed based on \(K\) users, as \(N\) approaches \((1-\FMR)K\), the collision probability increases accordingly. 
Informally, this can be explained by the fact that the number of pairs not in a false match decreases. Thus, the probability of drawing a pair in a false match increases with \(N\). 
Once all non-colliding pairs are exhausted, only colliding pairs remain, and the probability of a collision becomes one (see pigeonhole principle~\cite{trybulec1990pigeon}).
As a result, smaller databases tend to result in a lower collision probability, consistent with earlier observations on the relationship between database size and security.

Finally, it is instructive to examine the behavior as \(K \to \infty\), where the number of users used to compute the \(\FMR\) becomes arbitrarily large. 
In this case, the collision probability approaches \(1-(1-\FMR)^{\mathcal{N}}\), which is the result derived in section~\ref{thm:Approx_bio}. To be convinced, recall that $Q(N)=(1-\FMR)\prod_{i=2}^{\Ncomb} \dfrac{(1-\FMR)\Kcomb-i+1}{\Kcomb-i+1}$ (see proof of Theorem~\ref{thm:real_bio_bith}). Then, as \(K \to \infty\), this is equivalent to $Q(N)=(1-\FMR)\prod_{i=2}^{\Ncomb}(1-\FMR)$ implying the result.

\vspace{-3mm}

\section{Conclusion}
\label{sec:conc}
In this paper, we introduced the concept of \textit{untargeted attacks}, wherein an adversary aims to impersonate any user within a database. 
We analyzed the complexity of such attacks, leading to the definition of the \textit{critical population} (\textit{i.e.}, the maximum allowable database size) for a specified security level $S$ and an estimated system performance $\FMRe$. 
Furthermore, we derived the required $\FMRe$ to achieve a desired security level for a given database size. 
Our experimental results reveal that current systems achieving $\FMRe \approx 10^{-6}$ can offer at most $20$ bits of security against \textit{untargeted attacks} and fewer than $10$ bits of security when the database exceeds $10$ users. 
To ensure $128$ bits of security for a system accommodating the global human population (\textit{i.e.}, approximately $10^9$ users), an $\FMRe$ smaller than $10^{-45}$ is necessary.

Additionally, we investigated the biometric \textit{birthday problem} to study collision probabilities within a database. 
We determined both approximate and exact probabilities of weak  collisions as functions of $\FMRe$ and database size. 
From this analysis, we derived expressions for the \textit{critical population} in terms of $\FMRe$ and collision probability, as well as the required $\FMRe$ for a given database size to maintain a specific collision probability. 
Experimental results indicate that systems with $\FMRe \approx 10^{-6}$ can handle up to $100$ users before the collision probability exceeds $50\%$. 
For databases accommodating the global human population, the $\FMRe$ must be smaller than $10^{-18}$ to keep the collision probability below $50\%$.

In summary, this work provides an in-depth analysis of the $\FMR$ metric, revealing that biometric systems often considered secure may, in fact, fall short of expected security levels and more generally, to prevent both untargeted attacks and collisions, large scale systems must ensure $\FMRe\leq 10^{-45}$.

The present work concludes that protecting biometric data relies on two important aspects.  
First, the design of truly secure (\textit{zero-knowledge}) authentication protocols. Such protocols inherently satisfy the model of \textit{minimal information leakage}. However, poorly parameterized systems or low entropy in biometric templates can still undermine security.

So, the second aspect is the enhancement of biometric data entropy. Higher entropy directly strengthens resistance to exhaustive search attacks, reducing the likelihood of successful \textit{untargeted attacks}. Without sufficient entropy, adversaries can perform exhaustive searches to recover the original biometric features or approximate them, much like cracking weak password hashes. Techniques such as the fusion of multiple biometric modalities or the randomization of biometric data using a pseudo-random secret or a strong password can significantly mitigate these risks.  

We also identify techniques to slow down such attacks, inspired by strategies like password peppering and dedicated password hashing functions. These approaches include expanding the search space, effectively improving entropy, or designing obfuscation mechanisms and biometric cryptosystems that introduce artificial computational overhead, making the evaluation of a match more time-consuming.  


\vspace{-3mm}

\bibliographystyle{ACM-Reference-Format}
\bibliography{biblio.bib}

\appendix

\section{Recall on Classical Results in Statistics}
In this section, we provide a brief overview of the statistical results used in this paper.

\begin{figure*}
  \centering
  \includegraphics[width=\textwidth]{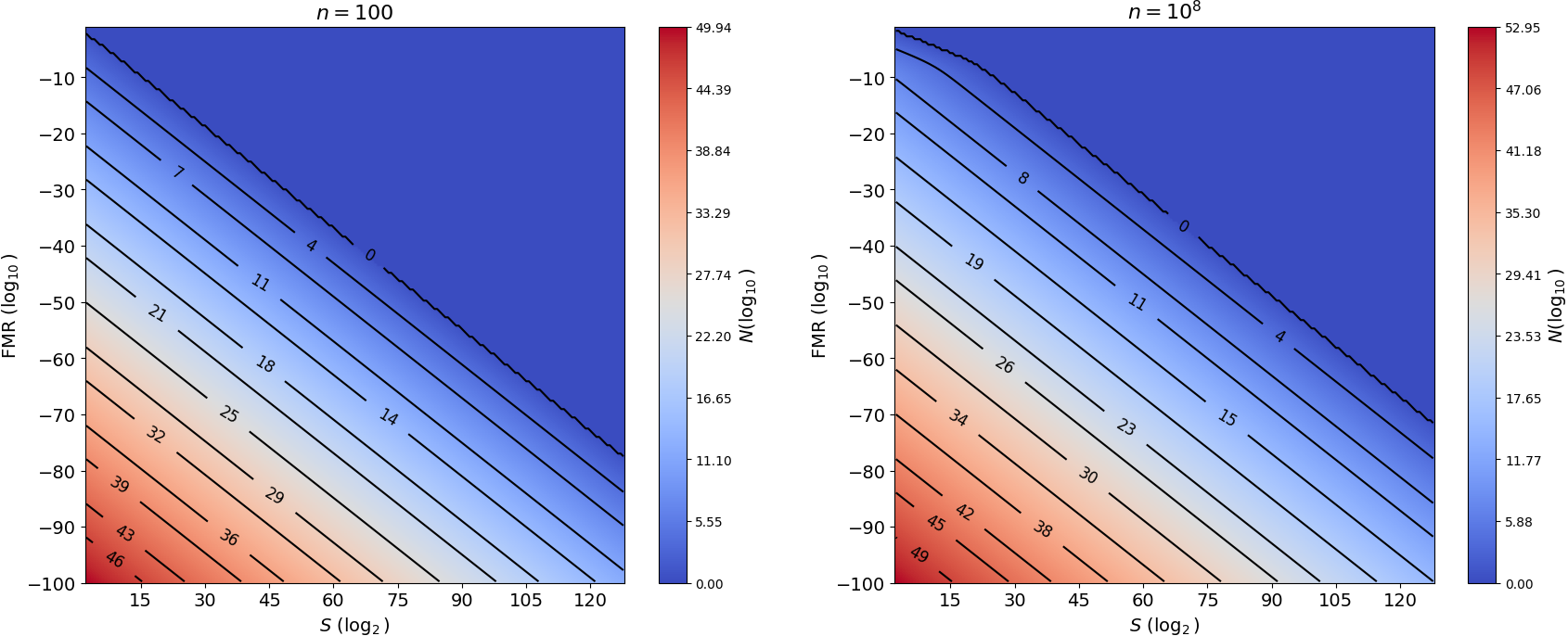}
  \caption{Critical population $N$ as a function of the system accuracy $\FMRe$ estimated on $n$ comparisons with a significance level of $95\%$ and the security level $S$ against \textit{untargeted attacks}.}
  \label{fig:critical_N1_untar}
\end{figure*}

\subsection{Central Limit Theorem (CLT)}
\label{sec:CLT:appendix}
The Central Limit Theorem~\cite{ross2017introductory} (CLT)
states that, under reasonable assumptions about the unknown distribution of the data, and provided the sample size is sufficiently large, the distribution of the sample mean converges almost surely to a normal distribution. 
In other words, the normal distribution acts as a universal attractor in this context.


\begin{theo}[Central Limit Theorem (CLT)]
\label{thm:CLT}\phantom{gotolinegotogot}
Let $(X_1, \dots, X_n)$ be a sequence of independent and identically distributed random variables with a finite expected value $\mu$ (\textit{i.e.}, $\mathbb{E}[X_i] = \mu$) and finite variance $\sigma^2$ (\textit{i.e.}, $\mathbb{V}[X_i] = \sigma^2 < \infty$). 
Then, as $n$ approaches infinity, the sample mean $$X = \frac{1}{n} \sum_{i=1}^n X_i$$ converges in distribution to a normal distribution $X \sim \mathcal{N}\left(\mu, \sigma^2/n\right)$, where $\mathcal{N}(\mu, \sigma^2/n)$ denotes the normal distribution with mean $\mu$ and variance $\sigma^2/n$.
\end{theo}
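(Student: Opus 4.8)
The plan is to prove the Central Limit Theorem via the method of characteristic functions, which is the most standard and self-contained route. Since the statement concerns the sample mean $X = \frac{1}{n}\sum_{i=1}^n X_i$, I would first reduce it to the canonical normalized form. Define the centered variables $Y_i = X_i - \mu$, so that $\mathbb{E}[Y_i]=0$ and $\mathbb{V}[Y_i]=\sigma^2$, and set
\[
Z_n = \frac{1}{\sigma\sqrt{n}}\sum_{i=1}^n Y_i = \frac{\sqrt{n}}{\sigma}\left(X - \mu\right).
\]
The goal then becomes showing that $Z_n$ converges in distribution to the standard normal $\mathcal{N}(0,1)$; the stated approximate form $\mathcal{N}(\mu, \sigma^2/n)$ is recovered by unwinding this affine change of variables.

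Next I would introduce the characteristic function $\varphi_Y(t) = \mathbb{E}[e^{itY_1}]$ of a single centered variable. Because the $Y_i$ are independent and identically distributed, the characteristic function of $Z_n$ factorizes:
\[
\varphi_{Z_n}(t) = \mathbb{E}\left[\exp\left(\frac{it}{\sigma\sqrt{n}}\sum_{i=1}^n Y_i\right)\right] = \left(\varphi_Y\!\left(\frac{t}{\sigma\sqrt{n}}\right)\right)^{n}.
\]
Using that $Y_1$ has finite second moment, a second-order Taylor expansion around $0$ gives $\varphi_Y(s) = 1 - \tfrac{1}{2}\sigma^2 s^2 + o(s^2)$ as $s\to 0$, where the linear term vanishes precisely because $\mathbb{E}[Y_1]=0$. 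Substituting $s = t/(\sigma\sqrt{n})$ yields
\[
\varphi_{Z_n}(t) = \left(1 - \frac{t^2}{2n} + o\!\left(\frac{1}{n}\right)\right)^{n},
\]
and taking logarithms (or invoking the standard limit $(1 + a_n/n)^n \to e^{a}$ when $a_n \to a$) shows that $\varphi_{Z_n}(t) \to e^{-t^2/2}$ for every fixed $t$, which is exactly the characteristic function of $\mathcal{N}(0,1)$.

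Finally I would invoke L\'evy's continuity theorem, which states that pointwise convergence of characteristic functions to a limit function continuous at $0$ implies convergence in distribution to the corresponding law. This delivers $Z_n \xrightarrow{d} \mathcal{N}(0,1)$, and hence the theorem after reversing the normalization. The main obstacle is making the Taylor remainder control fully rigorous: one cannot simply raise an $o(s^2)$ expansion to the $n$-th power and pass to the limit by inspection. Instead I would use a uniform bound such as $|e^{ix} - 1 - ix + x^2/2| \le \min\!\left(|x|^2, |x|^3/6\right)$ to control the remainder after rescaling and confirm it vanishes as $n\to\infty$. Invoking L\'evy's continuity theorem is the other essential non-elementary ingredient, and I would state it explicitly as the bridge from characteristic-function convergence to distributional convergence.
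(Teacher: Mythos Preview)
Your argument is the standard characteristic-function proof of the CLT and is correct, including the care you take with the remainder bound and the explicit invocation of L\'evy's continuity theorem. However, the paper does not actually prove this statement: Theorem~\ref{thm:CLT} appears in the appendix as a recalled classical result, stated with a textbook citation and no accompanying proof. So there is nothing in the paper to compare against; your proposal simply supplies a full proof where the paper, by design, gives none.
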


\subsection{Confidence Interval for a Normal Distribution with Unknown Mean and Variance}
\label{sec:CI:appendix}

A confidence interval~\cite{rees2018essential} is defined as an estimated range within which the true value of the parameter is expected to lie, based on a given confidence level. For example, if a $95\%$ confidence interval for the mean of a normally distributed population is computed, we can be $95\%$ confident that the true mean lies within this interval. Confidence intervals are most often computed for confidence levels of $95\%$ or $99\%$, but any significance level $\alpha$ can be chosen within the interval $]0,1[$.

\begin{theo}[Confidence Interval for the Mean of a Normal Distribution with Unknown Mean and Variance]
  \label{thm:CI}
  Let $X\sim\mathcal{N}(\mu,\sigma^2)$ for unknown $\mu$ and $\sigma$. Then, given an estimation $\overline{\mu}$ of $\mu$ and $\overline{\sigma}$ of $\sigma$ from $n$ samples, the confidence interval on $\overline{\mu}$ with risk $\alpha$ gives $$\mu \in \left[\overline{\mu}-\dfrac{c_\alpha\times\overline{\sigma}}{\sqrt{n}},\overline{\mu}+\dfrac{c_\alpha\times\overline{\sigma}}{\sqrt{n}}\right].$$
  $c_\alpha$ denotes the quantile of the Student's t-distribution $T$ with $n-1$ degrees of freedom (\textit{i.e.},$\mathbb{P}_T(-c_\alpha\leq T \leq c_\alpha)=1-\alpha$). This is equivalent to
  \begin{eqnarray*}
    \mathbb{P}\left(\overline{\mu}-\dfrac{c_\alpha\times\overline{\sigma}}{\sqrt{n}}\leq \mu \leq \overline{\mu}+\dfrac{c_\alpha\times\overline{\sigma}}{\sqrt{n}}\right)=1-\alpha.
  \end{eqnarray*}
  \end{theo}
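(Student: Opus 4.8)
The plan is to prove this via the classical \emph{pivotal quantity} method, constructing a statistic whose distribution does not depend on the unknown parameters $\mu$ and $\sigma$. I would start from $n$ independent samples $X_1, \dots, X_n \sim \mathcal{N}(\mu, \sigma^2)$ with sample mean $\overline{\mu} = \frac{1}{n}\sum_{i=1}^n X_i$ and sample variance $\overline{\sigma}^2 = \frac{1}{n-1}\sum_{i=1}^n (X_i - \overline{\mu})^2$. The goal is to exhibit a quantity, built only from the observable estimators $\overline{\mu}$, $\overline{\sigma}$ and the unknown $\mu$, that follows a Student's $t$-distribution with $n-1$ degrees of freedom, so that inverting a probability statement about it yields the stated interval.

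The three distributional ingredients I would establish are: ($i$) $\frac{\overline{\mu} - \mu}{\sigma/\sqrt{n}} \sim \mathcal{N}(0,1)$, which follows from Theorem~\ref{thm:CLT} (here exactly, since the $X_i$ are normal); ($ii$) $\frac{(n-1)\overline{\sigma}^2}{\sigma^2} \sim \chi^2_{n-1}$, the chi-squared distribution with $n-1$ degrees of freedom; and ($iii$) the independence of $\overline{\mu}$ and $\overline{\sigma}^2$. Facts ($ii$) and ($iii$) are the technical heart and constitute \emph{Student's theorem}; I would prove them together by applying an orthogonal (Helmert) transformation $Y = Q X$ with $Q$ orthogonal and first row $\frac{1}{\sqrt{n}}(1,\dots,1)$. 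Orthogonality preserves the spherical normal law, so the $Y_i$ remain independent and normally distributed; one then checks that $Y_1 = \sqrt{n}\,\overline{\mu}$ captures the mean while $\sum_{i=2}^n Y_i^2 = (n-1)\overline{\sigma}^2$ captures the variance, which simultaneously yields the chi-squared law and the sought independence.

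With these in hand, I would form the Student statistic
\[
T = \frac{(\overline{\mu} - \mu)/(\sigma/\sqrt{n})}{\sqrt{\bigl((n-1)\overline{\sigma}^2/\sigma^2\bigr)/(n-1)}} = \frac{\overline{\mu} - \mu}{\overline{\sigma}/\sqrt{n}},
\]
in which the unknown $\sigma$ cancels---this cancellation is precisely what makes the interval computable from data alone. By construction $T$ is a standard normal divided by the square root of an independent $\chi^2_{n-1}$ normalized by its degrees of freedom, so $T \sim t_{n-1}$. Finally, using the definition of the quantile $c_\alpha$ with $\mathbb{P}(-c_\alpha \leq T \leq c_\alpha) = 1 - \alpha$ and rearranging the double inequality to isolate $\mu$, I would obtain $\mathbb{P}\bigl(\overline{\mu} - c_\alpha \overline{\sigma}/\sqrt{n} \leq \mu \leq \overline{\mu} + c_\alpha \overline{\sigma}/\sqrt{n}\bigr) = 1 - \alpha$, which is exactly the claimed interval. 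The main obstacle is establishing the independence in ($iii$): unlike the other steps it is not a routine computation and genuinely requires the joint-normality/orthogonality argument above, since uncorrelatedness alone would not imply independence outside the Gaussian setting.
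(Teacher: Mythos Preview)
Your proof is the standard and correct derivation of the Student confidence interval via the pivotal-quantity method. There is, however, nothing to compare against: the paper does not prove this theorem. It is placed in an appendix entitled ``Recall on Classical Results in Statistics,'' stated with a textbook citation, and immediately followed by a table of $t$-quantiles, with no proof given or intended. So your proposal is not merely consistent with the paper's approach; it supplies an argument the paper deliberately omits as background.
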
  

The values of $c_\alpha$ for various degrees of freedom ($df$) and significance levels ($\alpha$) are provided in Table~\ref{tab:t-distribution}.

\begin{table}
  \centering
  \resizebox{\columnwidth}{!}{
  \begin{tabular}{r|rrrrrrrr}
    \multicolumn{1}{c}{}         & \multicolumn{8}{c}{\textbf{Level of Significance}} \\
    \textbf{df}         & \textbf{0.25} & \textbf{0.20} & \textbf{0.15} & \textbf{0.10} & \textbf{0.05} & \textbf{0.025} & \textbf{0.01} & \textbf{0.005} \\ \cline{1-9} 
    \textbf{1}        & $1.000$ & $1.376$ & $1.963$ & $3.078$ & $6.314$ & $12.706$ & $31.821$ & $63.657$  \\
    \textbf{2}        & $0.816$ & $1.061$ & $1.386$ & $1.886$ & $2.920$ & $4.303$ & $6.965$ & $9.925$  \\
    \textbf{3}        & $0.765$ & $0.978$ & $1.250$ & $1.638$ & $2.353$ & $3.182$ & $4.541$ & $5.841$  \\
    \textbf{4}        & $0.741$ & $0.941$ & $1.190$ & $1.533$ & $2.132$ & $2.776$ & $3.747$ & $4.604$  \\
    \textbf{5}        & $0.727$ & $0.920$ & $1.156$ & $1.476$ & $2.015$ & $2.571$ & $3.365$ & $4.032$  \\
    \textbf{6}        & $0.718$ & $0.906$ & $1.134$ & $1.440$ & $1.943$ & $2.447$ & $3.143$ & $3.707$  \\
    \textbf{7}        & $0.711$ & $0.896$ & $1.119$ & $1.415$ & $1.895$ & $2.365$ & $2.998$ & $3.499$  \\
    \textbf{8}        & $0.706$ & $0.889$ & $1.108$ & $1.397$ & $1.860$ & $2.306$ & $2.896$ & $3.355$  \\
    \textbf{9}        & $0.703$ & $0.883$ & $1.100$ & $1.383$ & $1.833$ & $2.262$ & $2.821$ & $3.250$  \\
    \textbf{10}        & $0.700$ & $0.879$ & $1.093$ & $1.372$ & $1.812$ & $2.228$ & $2.764$ & $3.169$  \\
    \textbf{20}        & $0.687$ & $0.860$ & $1.064$ & $1.325$ & $1.725$ & $2.086$ & $2.528$ & $2.845$  \\
    \textbf{30}        & $0.683$ & $0.854$ & $1.055$ & $1.310$ & $1.697$ & $2.042$ & $2.457$ & $2.750$  \\
    \textbf{40}        & $0.681$ & $0.851$ & $1.050$ & $1.303$ & $1.684$ & $2.021$ & $2.423$ & $2.704$  \\
    \textbf{50}        & $0.679$ & $0.849$ & $1.047$ & $1.299$ & $1.676$ & $2.009$ & $2.403$ & $2.678$  \\
    \textbf{60}        & $0.679$ & $0.848$ & $1.045$ & $1.296$ & $1.671$ & $2.000$ & $2.390$ & $2.660$  \\
    \textbf{70}        & $0.678$ & $0.847$ & $1.044$ & $1.294$ & $1.667$ & $1.994$ & $2.381$ & $2.648$  \\
    \textbf{80}        & $0.678$ & $0.846$ & $1.043$ & $1.292$ & $1.664$ & $1.990$ & $2.374$ & $2.639$  \\
    \textbf{90}        & $0.677$ & $0.846$ & $1.042$ & $1.291$ & $1.662$ & $1.987$ & $2.368$ & $2.632$  \\
    \textbf{100}        & $0.677$ & $0.845$ & $1.042$ & $1.290$ & $1.660$ & $1.984$ & $2.364$ & $2.626$  \\
    \textbf{500}        & $0.675$ & $0.842$ & $1.038$ & $1.283$ & $1.648$ & $1.965$ & $2.334$ & $2.586$  \\
    \textbf{1000}        & $0.675$ & $0.842$ & $1.037$ & $1.282$ & $1.646$ & $1.962$ & $2.330$ & $2.581$  \\ 
    \textbf{5000}        & $0.675$ & $0.842$ & $1.037$ & $1.282$ & $1.645$ & $1.960$ & $2.327$ & $2.577$  \\
    \textbf{10000}        & $0.675$ & $0.842$ & $1.036$ & $1.282$ & $1.645$ & $1.960$ & $2.327$ & $2.576$  \\
    \textbf{$\infty$} & $0.674$         & $0.842$         & $1.036$        & $1.282$        & $1.645$        & $1.960$         & $2.326$        & $2.576$   \\\cline{1-9}
    & \multicolumn{1}{c}{\textbf{75\%}} & \multicolumn{1}{c}{\textbf{80\%}} & \multicolumn{1}{c}{\textbf{85\%}} & \multicolumn{1}{c}{\textbf{90\%}} & \multicolumn{1}{c}{\textbf{95\%}} & \multicolumn{1}{c}{\textbf{97.5\%}} & \multicolumn{1}{c}{\textbf{99\%}} & \multicolumn{1}{c}{\textbf{99.5\%}} \\
    \multicolumn{1}{c}{}          & \multicolumn{8}{c}{\textbf{Level of Confidence}} \\
  \end{tabular}}
  \caption{Quantiles $c_\alpha$ of the Student's $t$-distribution for various degrees of freedom ($df$) and significance levels ($\alpha$).}
  \label{tab:t-distribution}
\end{table}

\section{Numerical Analysis of Untargeted Attacks with Respect to the Confidence Interval}
\label{Annexe:ICimpact}

We analyze the \textit{critical population size}, \(N\), as a function of the \textit{False Match Rate} (\(\FMR\)), the confidence interval bounds derived from the number of comparisons \(n\), and the desired security level \(S\). Figure~\ref{fig:critical_N1_untar} illustrates how varying the number of comparisons influences the critical population size.
For example, systems performing \(n = 10^3\) comparisons achieve \textit{up to 10 bits of security} with a \(95\%\) confidence level and a critical population size below \(10^3\). In contrast, systems performing \(n = 10^8\) comparisons can achieve \textit{up to 20 bits of security} under the same confidence level and critical population size.
These results demonstrate that the confidence interval introduces significant variability in the critical population size a system can accommodate. Specifically, increasing the number of comparisons by a factor of \(10^5\) results in a discrepancy of approximately two orders of magnitude (a factor of \(100\)) in the critical population size.


\end{document}